\newcommand{\xclaim}{\textsc{Xclaim}\xspace}
\newcommand{\name}{\textsc{CroCoDai}\xspace}
\title{\name: A Stablecoin for Cross-Chain Commerce }
\author{Dani\"el Reijsbergen}
\affiliation{%
  \institution{Nanyang Technological University}
 \country{Singapore}
}
\author{Bretislav Hajek}
\affiliation{%
  \institution{National University of Singapore}
 \country{Singapore}
}
\author{Tien Tuan Anh Dinh}
\affiliation{%
  \institution{Deakin University}
 \country{Australia}
}
\author{Jussi Keppo}
\affiliation{%
  \institution{National University of Singapore}
 \country{Singapore}
}
\author{Henry F.\ Korth}
\affiliation{%
  \institution{Lehigh University}
  \state{Pennsylvania}
 \country{USA}
}
\author{Anwitaman Datta}
\affiliation{%
  \institution{Nanyang Technological University}
 \country{Singapore}
}
\newcolumntype{L}[1]{>{\raggedright\let\newline\\\arraybackslash\hspace{0pt}}m{#1}}
\newcolumntype{C}[1]{>{\centering\let\newline\\\arraybackslash\hspace{0pt}}m{#1}}
\newcolumntype{R}[1]{>{\raggedleft\let\newline\\\arraybackslash\hspace{0pt}}m{#1}}
\newcolumntype{Q}[2]{%
    >{\adjustbox{angle=#1,lap=\width-(#2)}\bgroup}%
    l%
    <{\egroup}%
}
\newcommand*\rot{\multicolumn{1}{Q{45}{1em}}}
\newcommand*\rott{\multicolumn{1}{Q{45}{1em}|}}
\newcommand{\rda}[1]{\textcolor{black}{#1}}
\newcommand{\rdc}[1]{\textcolor{black}{#1}}
\newcommand{\rdd}[1]{\textcolor{black}{#1}}
\newcommand{\confintv}[2]{\begin{tabular}{c} #1 \\[-0.125cm] \scalebox{0.7}{$\pm$#2} \end{tabular}}
\begin{document}

\begin{abstract}

Decentralized Finance (DeFi), in which digital assets are exchanged without trusted intermediaries, has grown rapidly in value in recent years. 
\textit{Stablecoins}, which are pegged to a non-volatile asset such as the US dollar, are a prominent feature of DeFi as they mitigate the risk associated with price fluctuations. 
However, existing stablecoin \rdc{systems} are tied to individual blockchain platforms, and trusted parties or complex protocols are needed to exchange stablecoin tokens between blockchains. 
Our goal is to design a practical stablecoin \rdc{system} for cross-chain commerce, \rdd{and to do so we must overcome} two main challenges. The first is to support a large and growing number of blockchains efficiently. The second
is for the stablecoin to be resilient to blockchain platform failures and to price fluctuations that affect its collateral. 
We present \name to address these challenges. We demonstrate \name{}'s efficiency by comparing \rdd{the performance of} a prototype implementation to related baselines, and its resilience through an empirical analysis
of historical token price data.


\end{abstract}

\maketitle

\section{Introduction}
\label{sec:introduction}
\textit{Decentralized Finance} (DeFi) promises to revolutionize finance by removing the need for trusted middlemen to
process transactions. In DeFi, transactions commonly involve a change of ownership of digital \textit{tokens} that exist
on a distributed ledger such as a \textit{blockchain}\footnote{In the remainder of this paper, we will use the terms `blockchain' or `chain' to refer to any type of distributed ledger, including ledgers that are technically not chains (e.g., Avalanche's directed acyclic graphs \cite{rocket2018snowflake}).} -- e.g., one user sending bitcoins to another
\cite{btc_origin}. Many blockchain platforms (e.g., Ethereum \cite{buterin2014next}) also support \textit{smart
contracts}, which enable automated versions of financial concepts such as auctions, loans, swaps, decentralized exchanges, and
ownership shares. The total value of the tokens held in DeFi smart contracts peaked at over \$150 billion (USD) in late 2021
\cite{werner2021sok,defillama}, with at least \$100 billion held in decentralized exchanges alone \cite{xu2023sok}, however it had fallen to around \$50 billion as of December 2023 \cite{defillama}. 

\begin{figure}
\centering
\subfloat[Fully Siloed]{\includegraphics[width=0.3\linewidth]{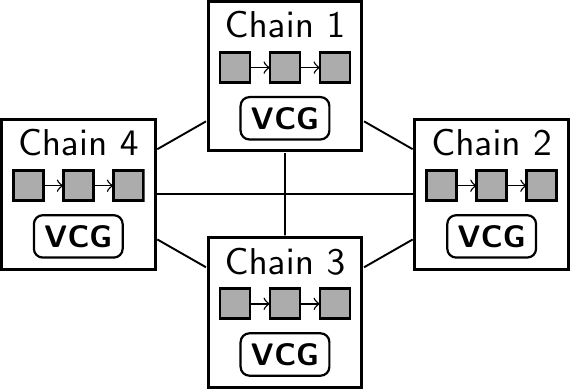}\label{fig:overview_siloed}}
\hspace{0.035\linewidth}
\subfloat[Wrapped + DAI]{\includegraphics[width=0.3\linewidth]{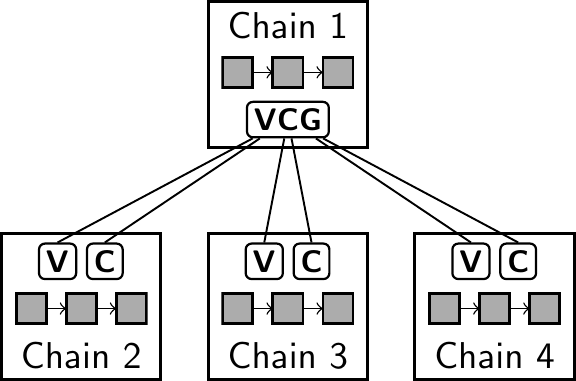}\label{fig:overview_dai_wrapped}} 
\hspace{0.035\linewidth}
\subfloat[\name]{\includegraphics[width=0.3\linewidth]{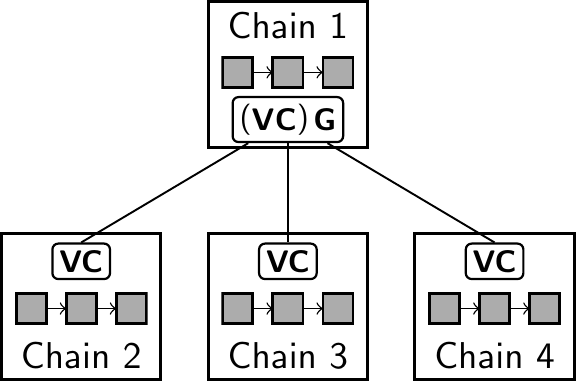}\label{fig:overview_crocodai}} 
\caption{Overview of different cross-chain stablecoin systems. \textbf{V}, \textbf{C}, and \textbf{G} respectively indicate the presence of vault, stablecoin, and governance contracts. }
\end{figure}

The DeFi ecosystem is currently fragmented over a multitude of different blockchain platforms which each offer their own choices in
terms of efficiency, security, functionality, and structure of DeFi contracts. For example, Cardano
\cite{hoskinson2017cardano,kiayias2017ouroboros}, Solana \cite{yakovenko2018solana}, Avalanche \cite{rocket2019scalable}, and Algorand \cite{gilad2017algorand} each had
a valuation of more than a billion USD in early 2023. These blockchains are not designed to \textit{interoperate}: i.e., there are no native
mechanisms for transferring tokens between chains.  However, \rda{the emergence} of blockchain
\textit{bridges}~{\cite{bridges,celer}} and dedicated interoperability platforms such as Cosmos {\cite{Polkadot20}} and Polkadot {\cite{cosmos20}} demonstrates an increasing
demand for cross-chain interoperability. 
A major challenge in cross-chain trades is \textit{atomicity}: the transactions in a trade must be committed on all blockchains, or not at all. 
Existing protocols for cross-chain atomicity, including \textit{relays} and \textit{hashed time locks}
\cite{herlihy2021cross,belchior2021survey,thyagarajan2022universal}, essentially lock the assets involved in a trade and only release (or
return) them once the transactions have been committed (or aborted) on all blockchains.  
While ensuring atomicity, this approach increases the financial risks for the users in the following ways: \begin{inparaenum}
\item the tokens held in escrow may lose value due to price fluctuations,
\item one party may abort the transaction, leading to an opportunity cost for the other parties, and
\item the DeFi platforms that facilitate the trade may be faulty or malicious\end{inparaenum}.
These risks are interrelated: if there is a significant change in the value of the tokens in escrow (1), then this may cause a party to abandon a trade (2), creating an opportunity cost to
the other users.
Price volatility can be mitigated in principle if the tokens
held in escrow are \textit{stablecoins} \cite{klages2020stablecoins,cao2021designing,moin2020sok} whose value is pegged
to a less volatile real-world asset (e.g., the US dollar). However, a flawed stablecoin design can itself put users'
tokens at risk, as witnessed by the collapse of the Terra/Luna stablecoin in May 2022 \cite{briola2023anatomy}, in which
stablecoins worth $\$$60 billion lost nearly all of their value. 

Our goal is to design a practical stablecoin as a central point of exchange for cross-chain commerce. 
We focus on stablecoin designs that are backed with crypto collateral -- e.g.,  Dai~\cite{daiwhitepaper} on the Ethereum blockchain -- to minimize the dependence on trusted parties outside the DeFi ecosystem (e.g., banks and asset management companies).
To realize our goal, we must address two main
challenges. The first challenge is to support a large and growing number of blockchains efficiently, including new and relatively low-use platforms. These low-use blockchains typically have lower transaction fees, but also less liquidity and existing DeFi support. The second challenge is to ensure that the stablecoin is resilient to \textit{black swan events} (e.g., 51\% attacks or critical smart contract bugs) that affect individual blockchain platforms, and to \textit{price fluctuations} that affect multiple blockchains simultaneously. If the total value of the collateral drops below the intended value of the issued stablecoins -- which we will refer to as \textit{system failure} -- then this may trigger the DeFi equivalent of a bank run, forcing stablecoin holders to sell their tokens at a loss. 
\rdc{Our key insight is that a wide range of collateral can address both challenges. In particular, it allows users from smaller chains to create robust stablecoins without interacting with a more expensive blockchain or relying on a trusted third party. Furthermore, it reduces the exposure of the stablecoin system as a whole to sudden crashes and price fluctuations.}

In this work, we present \name{}, a stablecoin for cross-chain commerce that addresses the above challenges. Compared to its closest baseline approaches, \name has greater efficiency, i.e., lower transaction costs, and better or similar resilience to black \rda{swan} events and price fluctuations.

A first baseline is for each blockchain to support its own crypto-backed stablecoin and  required components: \textit{vaults} to store collateral (\textbf{V}), \textit{coin} contracts to process stablecoin transfers (\textbf{C}), and \textit{governance} contracts for system-level decisions (\textbf{G}) -- this is depicted in \Cref{fig:overview_siloed}. \rda{In this baseline, stablecoins can be transferred between chains through bridges that are unique to each chain pair. Such bridges either lock tokens as collateral on one chain and create \textit{wrapped} \cite{wbtc} tokens on the other, or \textit{burn} tokens \cite{karantias2020proof} on one chain and mint them on the other.}\footnote{\rda{The Circle-created USDC bridge between Ethereum and Avalanche \cite{usdcbridge} is an example of the latter.}} However, the lack of liquidity on smaller blockchains makes it difficult to find collateral locally, and each individual stablecoin is highly vulnerable to black swan events and price fluctuations on its platform. 

A second baseline is to combine an existing crypto-backed stablecoin on a large (i.e., high-use) blockchain, e.g., Ethereum's Dai, with two bridges for each smaller chain: one to deposit wrapped versions of the smaller chains' native tokens on the large chain for collateral, and another to create wrapped versions of the stablecoin on the smaller chains. In this setting, each smaller chain has a vault contract (\textbf{V}) for collateral and an independent coin contract (\textbf{C}) for the wrapped stablecoins -- see \Cref{fig:overview_dai_wrapped}. However, this approach  requires unnecessary transactions on the large blockchain to create stablecoins or move them between chains -- such transactions are typically more expensive than on the smaller chains. By contrast, \name uses integrated vault and coin contracts (\textbf{V} and \textbf{C}) on each smaller chain, and leverages the governance layer on a dedicated relay chain for cross-chain transactions and system-level decisions, as depicted in \Cref{fig:overview_crocodai}. 

In this paper, we make the following contributions:

\begin{enumerate}
\item We present \name, a stablecoin for cross-chain commerce  (\Cref{sec:our_proposal}), and explore two main design variants: one uses a quorum of trusted nodes for the relay chain, and the other uses an existing blockchain (\Cref{sec:relay_chain_designs}). {While \name{} re-uses parts of the existing Dai stablecoin system, interactions with vaults are strictly local to the smaller chains, and it supports cross-chain price oracles to improve price discovery on smaller blockchains}.

\item We present a theoretical analysis of \name{} to show that its resilience to black swan events  (\Cref{sec:security_analysis}) and price fluctuations  (\Cref{sec:financial_analysis}) is greater than the first baseline (fully siloed), and similar to the second baseline (Dai + wrapped tokens), due to its greater ability to diversify its collateral. In particular, our analysis shows that \name{} has $\approx$$30\%$ lower risk of system failure than the first baseline, based on historical token price data from a period that includes the collapse of the FTX exchange. 

\item We present a prototype implementation of \name{} and perform microbenchmark and end-to-end experiments (\Cref{sec:implementation}) to compare \name{} to the second baseline (Dai + wrapped tokens) in terms of transaction costs and processing times. We find that the gas cost of stablecoin creation is only half as much in \name{} compared to the second baseline -- furthermore, \name{} can complete the entire process in a single block, whereas the baseline requires block finalization (which takes $32+$ blocks on Ethereum) on two chains for security. The total gas cost of cross-chain transfers is $\approx$$40\%$ higher in \name than in the baseline, but the baseline consumes $\approx$$120\%$ more gas on the (expensive) relay chain. \rdc{For example, if gas on the relay chain is $8\times$ more expensive than on the smaller chains,\footnote{\rdc{The $8\times$ difference is a conservative estimate based on a sample of gas cost differences between Ethereum and layer-2 chains such as Arbitrum and Optimism \cite{l2fees,ethereumgas,optimismgas} in early 2024. Recently, far greater differences ($200\times$ and above) have been observed, e.g., 0.06 GWei/gas for Optimism vs.\ 10 GWei/gas for Ethereum on 18 July 2024.}} then \name{} is roughly 40\% cheaper than the second baseline.}

\end{enumerate}
\rdc{In summary, \name{} is cheaper than a baseline with wrapped tokens, is less vulnerable to price fluctuations and black-swan events than a siloed stablecoin, and is not reliant on any trusted third parties unlike existing stablecoins such as Tether and Paypal USD. By facilitating stablecoin use on smaller chains and stablecoin transfers between chains, \name{} is a practical medium of exchange in cross-chain commerce \cite{datta2024blockchain}. To aid reproducibility, we have made the data and the source code for our experiments publicly available via} \url{https://github.com/ntublockchain/crocodai}.

The remainder of the paper is structured as follows.  \Cref{sec:background} provides background information on DeFi
and related work. 
\Cref{sec:our_proposal} presents the design of \name.
We analyze \name's resilience against black swan events in \Cref{sec:security_analysis} and price fluctuations in \Cref{sec:financial_analysis}.
\Cref{sec:implementation} details the prototype implementation and corresponding experimental results. \Cref{sec:conclusions} 
discusses future work and concludes the paper.

\section{Background}
\label{sec:background}

\subsection{General Background}

\hspace{0.5cm}\textbf{Tokens and DeFi. } 
In DeFi, the principal function of blockchain transactions is to transfer \textit{tokens}, which represent ownership of assets. There are two common types of tokens: \textit{fungible} tokens -- also commonly called \textit{cryptocurrencies} -- and \textit{non-fungible tokens} (NFTs) \cite{das2022understanding}. Fungible tokens of the same type are identical, whereas each NFT represents a unique asset.
Smart contracts can be used to define cryptocurrencies that are different from the blockchain's native cryptocurrency. In Ethereum, this is particularly facilitated by the \textit{ERC20 token standard} \cite{erc20} which defines a common set of smart contract functions required by any new cryptocurrency. 
Smart contracts can also support NFTs, e.g., via OpenSea or Mintable. 
A prominent use case for user-defined tokens is to create a version of one blockchain's token on another blockchain -- e.g., Wrapped Bitcoin (WBTC) on the Ethereum blockchain. In general, \textit{wrapped} tokens \cite{wbtc} are tokens maintained by \textit{custodians} -- e.g., for WBTC this includes the Kyber Network and BitGo -- who are trusted to create wrapped tokens if and only if an equivalent amount of tokens under their control have been locked on the supporting blockchain. \xclaim \cite{zamyatin2019xclaim} is an academic proposal that is similar to WBTC but which requires custodians to deposit additional collateral, thus adding a financial penalty as a disincentive for abuse (e.g., stealing the locked tokens).


\textbf{Stablecoins. } A stablecoin is a cryptocurrency whose value is pegged to an underlying fiat currency, e.g., the US dollar.
There are multiple types of stablecoins \cite{moin2020sok,klages2020stablecoins}, of which the most common ones are:
\begin{enumerate}[leftmargin=15pt]
\item \textit{Fiat-backed stablecoins}, which are backed by fiat currency held largely in cash or cash equivalents.
The essential guarantee that underpins the price stability of a fiat-backed stablecoin is that in case of system shutdown, each stablecoin can be used to redeem a corresponding amount of the pegged fiat currency.
\item \textit{Crypto-collateral-backed stablecoins} (or \textit{crypto-backed stablecoins} for brevity), which are backed by other cryptocurrencies that are stored in one or more smart contracts. In this case, the guarantee is that in case of system shutdown, each stablecoin can be redeemed for a basket of tokens whose value equals one unit of the pegged currency.
\item \textit{Algorithmic stablecoins}, which do not use reserves or collateral and which attempt to maintain a peg by adjusting the supply of a paired token \cite{moin2020sok}.
\end{enumerate}
A multitude of fiat-backed stablecoins exist, the most prominent (as of September 2023) being Tether (USDT), USD Coin (USDC), Binance USD (BUSD), and TrueUSD (TUSD). The most prominent crypto-backed stablecoin is DAI on the Ethereum blockchain, which is maintained by the MakerDAO foundation and which we discuss in more detail in \Cref{sec:minimal_dai}. \rdc{Other examples of crypto-backed stablecoins include Liquity USD (LUSD) and Reflexer’s
RAI, which are backed entirely by ETH. Other recent stablecoin proposals include FRAX \cite{kazemian2022frax}, which is an algorithmic stablecoin with a cryto-backed component, and Celo \cite{clabs2019celo}, which is backed by a mixture of fiat and crypto assets.}

During our observation period (see \Cref{sec:financial_analysis}), most fiat- and crypto-backed stablecoins did an overall sound job in maintaining their peg with the US dollar: the observed price ranges of DAI, USDT, USDC, BUSD, and TUSD were \mbox{[0.9921, 1.0062]}, \mbox{[0.9834, 1.0005]}, \mbox{[0.9991, 1.0073]}, \mbox{[0.9988, 1.0126]}, and \mbox{[0.9937, 1.0056]}, respectively.
The analysis in \cite{ma2023stablecoin} suggests that fiat-backed stablecoins have a trade-off between price stability and risk of system failure in the sense that better arbitrage improves the former at the cost of increasing the latter. \rdc{Recently, established financial services have launched their own fiat-backed stablecoins, e.g., PayPal's PYUSD, which is available on the Ethereum and Solana blockchains \cite{paypal}. Finally, there have been proposals by central banks to issue their own digital tokens, which would have the same functionality as stablecoins -- e.g., the digital euro, which recently completed its 2-year investigation phase and is now moving to the preparation phase \cite{euro}.}

Fiat-backed stablecoins introduce a trust assumption, namely on the agents who handle and store the fiat currency. In Tether, this trust assumption has been widely questioned -- accusations include Tether using fiat collateral to buy high-risk assets \cite{tetherallegation} and creating unbacked stablecoins to prop up the price of bitcoins \cite{griffin2020bitcoin}. Furthermore, algorithmic coins have nothing to underpin them in case of a \textit{death spiral} \cite{moin2020sok,briola2023anatomy}, as witnessed by the  loss of over 60 billion USD when the Terra/Luna stablecoin system failed. The main threat to crypto-backed stablecoins is that the system may fail if the value of the collateral decreases too rapidly -- \rda{however}, we show in this paper that this risk can be made small with sufficient overcollateralization. In the following, we focus particularly on Dai, which is the most prominent crypto-backed stablecoin. 

\subsection{The Dai Stablecoin System (DSS)}
\label{sec:minimal_dai}

In this section, we describe the existing single-chain DSS. Although there are documents and code repositories that describe the DSS in its various stages of evolution, e.g., \cite{daiwhitepaper,multicollateraldai,daigithub,moin2020sok,bhat2021simulating,chen2022understanding}, we focus on the GitHub code \cite{daigithub} and the description of the multi-collateral DSS \cite{multicollateraldai}. 

\textbf{Dai Tokens.}
In the DSS, Dai tokens are minted by users who deposit cryptocurrency tokens as collateral.
In the first iteration of the Dai stablecoin (the single-collateral Dai or ``Sai''), only ETH tokens could be deposited for this purpose.
The procedure for a user, Alice, who wants to mint new Sai tokens is as follows.
First, she must send a transaction to a Dai vault to create a \textit{Collateralized Debt Position} (CDP) \cite{daiwhitepaper}. 
Next, she must send ETH tokens as collateral to the contract to fund the CDP. 
Alice can use a funded CDP to create Dai stablecoins and withdraw them to her wallet, incurring \textit{debt}. At any later time, she can send Dai tokens back to the CDP to repay her debt. The above can be generalized to a multi-collateral (but single-chain) DSS by associating each CDP with a token type: in the DSS, this could theoretically include any ERC20 token, including wrapped tokens such as WBTC.
Debt increases exponentially over time, and the rate at which it increases is determined by the \textit{stability fee}.

\textbf{Liquidations.}
At all times, Dai CDPs must be \textit{overcollateralized}, i.e., the ratio of the total value of the Alice's collateral to her debt must exceed some threshold called the \textit{liquidation ratio} \cite{bhat2021simulating,daiwhitepaper}.
 By default, the liquidation ratio in Dai is set to $1.5$ -- i.e., Alice can withdraw Dai stablecoins worth at most $\frac{2}{3}$ of the total value of the collateral in their CDP \cite{moin2020sok}.
Similarly, Alice can return Dai and withdraw collateral from her CDP as long as the above condition holds. If a CDP has no associated debt, then it can be \textit{closed} and all remaining collateral can be withdrawn. If, at any time, a CDP's ratio of its collateral value to its debt is insufficient, then Alice's CDP is \textit{underwater} and can be liquidated. CDPs can go underwater either because the value of the collateral has dropped due to price fluctuations, or because the stability fee has caused the debt to become too high. To avoid liquidation, Alice must either deposit additional collateral, or raise Dai to clear an at-risk position -- if this fails, a \textit{collateral auction} can be started by other users. During a collateral auction, other users can bid to buy (portions of) Alice's collateral using Dai tokens. Once the CDP's debt has been cleared, 
Alice receives the remaining collateral (if any) minus a penalty. 

\textbf{Price Feeds.} To determine the value of deposited collateral, MakerDAO uses information from price feeds supplied by \textit{oracles} -- e.g., Chainlink or Uniswap contracts \cite{liu2021first}. The number of price feeds used by the DSS has increased from 14 between 2017 and November 2019 \cite{gu2020empirical} to 20 in late 2023 \cite{mips}.
Smart contracts that maintain CDPs -- i.e., the vaults -- periodically update token prices by setting each price equal to the median of a set of recent prices for the associated token.

\textbf{Governance.}\label{sec:dss_governance} All major decisions about the functioning of the protocol are taken by users who hold MKR, which is MakerDAO's governance token. Such decisions include 1) verifying and adding smart contracts that implement any of the functionalities above, 2) changing the Dai savings rate, and 3) occasionally changing other system-level parameters. The following are the main system-level parameters (see page 15 of \cite{multicollateraldai}):
\begin{inparaenum}
	\item the \textit{debt ceiling} for each token type, i.e., the maximum amount of total debt that can be created using a given collateral type,
	\item the stability fee,
	\item the liquidation ratio,
	\item the liquidation penalty,
	\item the duration of collateral auctions,
	\item the duration of auction bids, and
	\item the minimum auction step size (to prevent bids with a minor increment).
\end{inparaenum}

\textbf{Out-of-Scope Concepts.} The following DSS concepts are beyond the scope of this paper:
\begin{inparaenum}
\item the Dai Savings Rate mechanism, which allows users to deposit Dai to collect excess stablecoins created from the stability fee,
\item debt auctions, which occur when a collateral auction fails to raise enough Dai to cover the CDP's debt, 
\item surplus auctions, which occur when the amount of Dai left in vaults due to auctions and stability fees exceeds some threshold, and
\item emergency shutdowns, which can be triggered after a catastrophic event (e.g., system failure) and during which the total remaining collateral is distributed to Dai holders.
\end{inparaenum}

\subsection{Related Work}
\label{sec:related_work}

The DSS can be seen as an extension of \textit{red-black coins} \cite{salehi2021red}, which feature two token types of which one has reduced price volatility (e.g., Dai tokens vs.\ CDP debt). In \cite{salehi2021red}, single-asset price simulations are used to estimate the price of the volatile tokens -- the main difference with the analysis in \Cref{sec:financial_analysis} is that we estimate failure risks, that we use multiple correlated assets, and that we use a heavy-tailed distribution to better represent token price fluctuations. Price simulations of DSS collateral can also be found in \cite{bhat2021simulating}, and \cite{klagesmundt2022while} presents a stochastic model for a single-asset stablecoin. In \cite{grobys2021stability}, the relationship between stablecoin and Bitcoin price fluctuations is investigated. In \cite{ma2023stablecoin}, fiat-backed stablecoins are found to have a trade-off between price stability and risk of system failure.

For secure cross-chain transfers, \textit{cross-chain deals} \cite{herlihy2021cross} provide a modeling framework to describe sequences of cross-chain token transfers. 
In cross-chain deals, the cost of a participant abandoning the protocol before completion (\textit{sore-loser attacks} \cite{xue2021hedging}) can be made explicit -- this form of misbehavior can be disincentivized through a \textit{premium} \cite{engel2021failure} paid by the parties that abandon the protocol before termination.
Cosmos {\cite{Polkadot20}} and Polkadot {\cite{cosmos20}} are blockchain platforms that were specifically developed with blockchain interoperability in mind, but which do not trivially interoperate with existing blockchains outside their ecosystems. 
\rdc{However, a version of \name{} could be implemented as a native stablecoin on these platforms with collateral on multiple subchains. Alternatively, collateral on multiple chains can be integrated into \name{} through trusted relays or zero-knowledge proofs as discussed in \Cref{sec:our_proposal}.}
Finally, \xclaim \cite{zamyatin2019xclaim} claims to provide secure cross-chain transfers, but has limitations when transferring to/from smaller blockchain platforms that lack sufficient liquidity because the custodian must submit additional collateral.




\section{\name: A Cross-Chain Stablecoin}
\label{sec:our_proposal}

\begin{figure}[t]
\centering
	\includegraphics[width=0.6\textwidth]{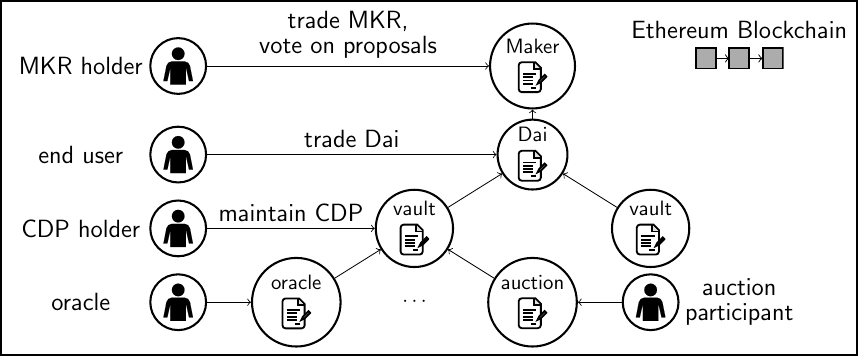}
\caption{Overview of the existing single-chain DSS. }
 \label{fig:single_chain_dss}
\end{figure}

In this section, we present \name, which extends the functionality of the single-chain DSS (depicted in \Cref{fig:single_chain_dss}) to a multi-chain setting (depicted in \Cref{fig:crocodai}).  Our main design goal is a stablecoin that is backed by collateral on multiple blockchains, and whose tokens can be moved securely and efficiently between chains. 
In the following, we will refer to blockchains that store collateral and stablecoin tokens as \textit{coin chains} (as per \cite{herlihy2021cross}). The communication between the coin chains is managed by a \textit{relay chain}, which may be an existing (coin) chain. The main difference between the single-chain DSS and \name is the inclusion of \textit{relay nodes} -- depicted with double circles in \Cref{fig:crocodai}. In \name{}, all vaults are local to the coin chains, and  cross-chain price oracles are supported to complement the oracles on smaller coin chains.
This allows \name{} to have lower costs for local operations such as creation and transfer of stablecoins on smaller blockchain platforms (we will discuss this in more detail in \Cref{sec:implementation}.)

In the following, we first discuss in more detail which DSS operations remain local and which require interaction with the relay chain. We then discuss the main components of the coin and relay chains. Next, we discuss two design variants for the relay chain: the first relies on a quorum of trusted nodes who perform most operations off-chain, and the second uses an existing blockchain. 
We conclude by listing the security requirements of a cross-chain stablecoin, which will be the subject of the Sections~\ref{sec:security_analysis}~and~\ref{sec:financial_analysis}.

\subsection{Overview}


\hspace{0.35cm}\textbf{Local Operations.}
\name{} supports the \rda{following} local operations.
\textbf{CDP Management:} Each coin chain has its own vaults that allow users to create CDPs using tokens on that blockchain. As such, stablecoins can be created locally on minor chains, i.e., without the need for transactions on a more expensive chain, as long as the debt limit on the minor chain is not violated. Since CDPs are local, liquidation auctions and other mechanisms such as savings rates are also processed locally.
\textbf{Local Stablecoin Transfers:} Each coin chain keeps track of the balance of stablecoins held by its accounts, so the stablecoins can be exchanged locally between accounts. 
\textbf{Local Price Information:} 
Finally, local oracles may provide the vaults with price information.

\textbf{Cross-Chain Operations.} The following operations require relay chain involvement. 
\textbf{Cross-chain Transactions:} In a cross-chain transaction, stablecoins are burned on one coin chain and an equivalent number is minted on another coin chain, as we discuss in more detail below. 
\textbf{Ward Management:} Governance layer approval is needed to {add/remove addresses} that are authorized to call security-critical functions such as mint or burn on another contract -- we call such addresses \textit{wards} as per DSS terminology.
Periodically, \rda{the governance layer may grant or revoke ward status to/from smart contracts} for a variety of reasons -- e.g., 
\begin{inparaenum}
\item to add support for an emerging blockchain platform, 
\item because a critical bug has been found in an existing smart contract, 
\item because a chain has undergone a hard fork that interferes with the functioning of the contract, or
\item to implement a more efficient version of a component.
\end{inparaenum}
\textbf{System Parameter Updates:} Governance layer approval is also needed to update the system parameters discussed in \Cref{sec:dss_governance}, e.g., a chain's debt ceiling or a token's overcollateralization rate on a chain.
\textbf{Global Price Information:} The final function is to provide an additional source of token price information to vaults through relay chain oracles --
this particularly benefits smaller blockchains that lack a mature DeFi ecosystem.

The exact mechanism for cross-chain transfers is as follows: a user who holds stablecoins on a coin chain
first submits a cross-chain transfer request to a relay node and transfers the tokens for escrow. The request is added to a hashmap where it awaits commitment. The nodes on the relay chain monitor the coin chains, and once they detect a cross-chain transfer request, they sign a commit message. When the coin chain receives the commit message, it commits the transaction and burn the tokens held in escrow. A commit message is also sent to the relay node on the target blockchain, which  mints an equivalent number of stablecoins. If the transaction is aborted for any reason, then the stablecoins in escrow are returned to the sender.

\begin{figure*}[t]
	\centering
	\includegraphics[width=\linewidth]{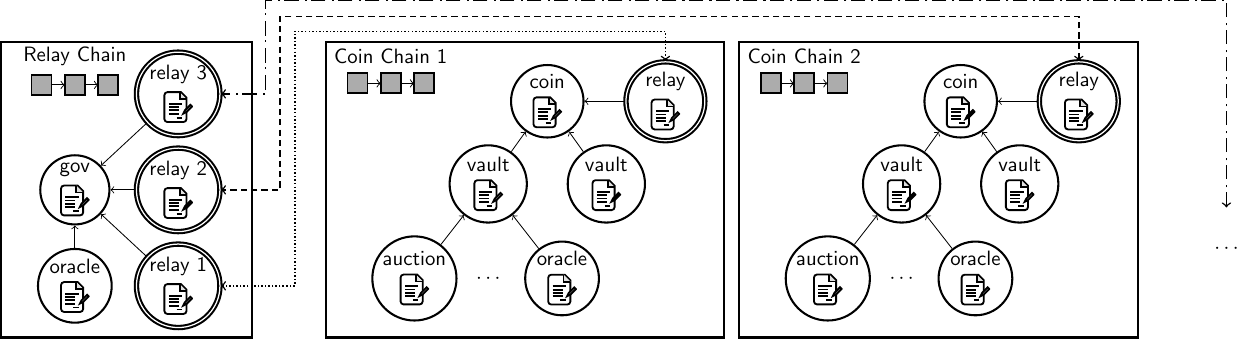}
	\caption{Overviews of \name, a multi-chain stablecoin system based on the DSS. The relays (double circle) are not shared with the single-chain DSS.
 }
  \label{fig:crocodai}
\end{figure*}

\subsection{Core Components}

\hspace{0.35cm}\textbf{Coin Chain Components.}
Each coin chain has the following components.
\textbf{Coin:} implements a contract that is similar to \texttt{Dai} in the DSS, and processes transfers of stablecoins between users on the coin chain. There is only one coin contract per chain.
\textbf{Vault:} implements contracts that are similar to \texttt{Vat}, \texttt{GemJoin} and \texttt{DaiJoin} in the DSS,  which allow users to create CDPs and withdraw/deposit collateral and stablecoins. There can be multiple vaults per chain.
\textbf{Auction:} implements auction contracts that liquidate the collateral of underwater CDPs, with any appropriate auction logic -- e.g., \texttt{Cat} or \texttt{Dog} in the DSS. There can be multiple auction contracts per chain.
\textbf{Bottom-Level Oracle:} implements a price feed using a contract similar to \texttt{Spot} in the DSS, combined with a Medianizer.
\textbf{{Bottom-Level} Relay:}  validates messages from the relay chain. It processes 
\begin{inparaenum}
    \item the addition of wards to \name{} contracts,
    \item incoming cross-chain transfers,
    \item commitments of outgoing cross-chain transfers, and
    \item updates of system parameters.
\end{inparaenum}

\textbf{Relay Chain Components.}
The relay chain has the following components.
\textbf{Governance:} tracks the ownership of governance tokens, and allows users to transfer them. \rdc{As in the DSS,} users who hold governance tokens can vote on system-level actions such as adding \rda{wards and} updating system-level parameters. \rdc{Our security assumption regarding governance token holders is that a majority is honest and online at any point during the voting window for system-level decisions.}
\textbf{Top-Level Oracle:} The relay chain may implement system-level price feeds to complement the price feeds on the coin chains.
\textbf{{Top-Level} Relay:} To process cross-chain transfers, the relay chain must be able to process information from the coin chains, particularly the transfer requests -- we discuss this in more detail below.

\subsection{Relay Chain Designs}
\label{sec:relay_chain_designs}

In this section, we discuss two possible design variants of the relay chain. The first relies on a quorum of trusted nodes with known public keys -- each trusted node runs an off-chain (full or light) blockchain client for each of the underlying chains to check the validity of cross-chain transfers, and the nodes maintain a distributed ledger for governance decisions. The second uses smart contracts on an existing chain to implement the consensus logic of the underlying chains (potentially using zero-knowledge cryptography) to verify cross-chain operations on-chain.

\textbf{Off-Chain Blockchain Clients.} In this approach, the stablecoin system is maintained by $n$ relay nodes with known public keys. The relay nodes use their (light) clients to detect and validate cross-chain transactions, and generate signatures using their private/public key pair to commit cross-chain transfers and \rda{process system-level actions approved by governance token holders}. In this setting, the only functionality of the bottom-chain relay contracts is to verify digital signatures. 
We discuss three different subvariants of this approach in \Cref{sec:implementation}.
\rdc{As is common in Byzantine fault tolerant systems \cite{castro1999practical, lamport1982byzantine}, we assume that at most $f$ nodes are faulty or malicious such that $n \geq 3f+1$.}
We note that it is vital that the relay nodes are controlled by independent entities: e.g., in the attack against the Ronin bridge, 5 out of 9 validators were successfully attacked \cite{roninhack}, of which 4 belonged to the same entity \cite{roninhack2}.

\textbf{On-Chain Consensus Logic.} In this approach,  the consensus logic of the coin chains is implemented in the {top-level relay contracts}, and the consensus logic of the relay chain is implemented in the {bottom-level relay contracts}.
 Essentially, the functionality of a light client -- i.e., validating block headers and proving the state of a contract using the header -- is implemented inside the smart contracts. This is the approach taken in, e.g., \xclaim \cite{zamyatin2019xclaim}. For example, in Bitcoin's Nakamoto Consensus \cite{btc_origin} this means that the smart contract computes header hashes and compares their integer value to the proof-of-work difficulty,\footnote{\rda{This may be computationally efficient if blockchains share built-in hash functions, otherwise this can be expensive.}} whereas in Proof-of-Stake Ethereum the smart contract verifies that the total stake of the nodes who have approved a block is sufficient. \rda{In this setting, the main role of the relay nodes is to generate and submit validity proofs.}
 Techniques to increase the efficiency of light clients have been proposed in the literature for both proof-of-work (e.g., Bitcoin \cite{daveas2020gas}) and proof-of-stake blockchains  (e.g., Algorand \cite{micali2021compact} and Mina \cite{bonneau2020mina}) \rda{-- in some cases using zero-knowledge proofs that may incorporate the blockchain's consensus logic \cite{xie2022zkbridge}.} 
 \rdc{Zero-knowledge proofs that are verifiable on multiple chains, e.g., ZKM's entangled rollups \cite{zkm2024entangled}, can also be used to bridge the chains that implement collateral to a layer-2 network that implements the vaults.}
 Finally, the data on the different coin chains can be integrated using TrustBoost \cite{wang2022trustboost}.
 
Although the second design variant avoids a trust assumption on a quorum of relay nodes, current iterations of such designs have a large computational overhead, typically requiring several kilobytes of data sent to relay contracts and more than 10 seconds to generate signatures \cite{micali2021compact,xie2022zkbridge}.\footnote{Improving the performance of such zero-knowledge schemes, e.g., using folding schemes  \cite{kothapalli2022nova}, is an active research area.} As such, we focus on the first variant in \Cref{sec:implementation} and leave the second as future work.

\subsection{System Requirements}
\label{sec:requirements}

The central requirement that we focus on in this paper is the following:
\begin{enumerate}
\item[(*)] \rdc{\textbf{Sufficient Overcollateralization.}} At any time $t \geq 0$, the ratio of the value of the collateral tokens to the total number of stablecoins in the system exceeds a \rda{\textit{safety threshold}} $\theta > 1$.
\end{enumerate}
The motivation behind this requirement is as follows: if the value of the collateral becomes too low, then the system is no longer able to reimburse all stablecoin holders with an equivalently valued basket of tokens in case of system shutdown. As such, we can expect users to start selling their stablecoin tokens at a loss to redeem as much value as possible, causing the stablecoin to lose its peg. \rdc{We assume that this sell-off may occur before the critical point is reached at which the value of issued stablecoins exceeds that of the collateral, because nervous investors may preempt the loss of the peg. As such, we choose a value of $\theta$ that is greater than $1$: a value of strictly $1$ corresponds to a stablecoin system that fails as soon as the aforementioned critical point is reached, whereas $\theta>1$ also covers scenarios in which system failure occurs earlier, in anticipation of the depegging.}

Due to random price fluctuations, we cannot guarantee with absolute certainty that full backing will hold in all circumstances. However, we use our analysis to show that for reasonable choices of $\theta$ and the liquidation ratio, the probability that full backing is violated after a medium-length period (i.e., one day) is very small for historical data. 
In \Cref{sec:security_analysis}, we also investigate under which conditions full backing continues to hold in the event of a black swan event such as a sudden token price collapse or a 51\% attack on a blockchain. 

We also require the following properties under the threat and token price model.
\begin{enumerate}[leftmargin=15pt]

\item \textbf{Safety.} Users do not lose tokens during stablecoin transfers and CDP operations.

\item \textbf{Liveness.} Stablecoin transfers and CDP operations eventually happen.

\item \textbf{Safety of Collateral} (informal). The probability that a minority of corrupted oracles cause a CDP that is not underwater to be inadvertently liquidated is small.
\end{enumerate}
More formally, for the last property we consider a minority of corrupted price oracles whose goal is to convince the protocol to accept a price $p^*$ instead of a true base price $p$ -- safety of collateral then holds if their probability of success vanishes exponentially as $|p-p^*|$ increases (see also \Cref{thm:corrupted_price_feeds} in \Cref{sec:security_analysis}).

As we will see in \Cref{sec:security_analysis}, the former two properties largely follow from results in the existing scientific literature. The latter is a result that we prove for the multi-chain setting in \Cref{sec:security_analysis}, but which is also applicable in the single-chain setting.

\subsection{\rdc{Illustrative Application}}
\label{sec:end_to_end}

\rdc{
We consider an example of cross-chain commerce: a setting in which an asset -- e.g., a basket of cryptocurrencies or an NFT -- is auctioned on one chain, and for which there are interested bidders who own tokens on multiple other chains. For the duration of the auction, price stability is important because changes in the value of bids could lead to unnecessary resubmissions or withdrawals of bids, causing a diminished user experience and/or opportunity costs \cite{xue2021hedging}. The auction example allows us both to illustrate the end-to-end processes of creating and transferring stablecoins in \name{}, as well as to compare \name{} to existing approaches in terms of performance and user experience. For this example, we will refer to the chain on which the asset is sold as the `auction chain' and to the chains on which bidders hold tokens as `coin chains'.}

\rdc{To achieve an auction with price stability, we consider two main approaches:
1.\ bidders use a centralized exchange (CEX) to swap their tokens into stablecoins on the auction chain, 2.\ bidders using \name{}'s built-in mechanism for cross-chain transfers. The first represents the most straightforward way to achieve a cross-chain auction in the current ecosystem by leveraging established exchanges such as Coinbase or Binance, whereas the second is our proposal. We also discuss other approaches at the end of this section.
}

\begin{figure*}[t]
	\centering
    \subfloat[][CEX Baseline]{\includegraphics[width=0.45\linewidth]{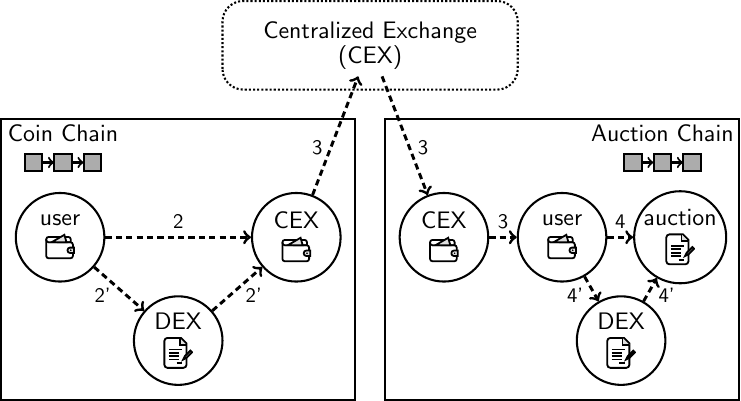}\label{fig:end_to_end_cex}}\hspace{0.04\linewidth}
    \subfloat[][\name{}]{\includegraphics[width=0.45\linewidth]{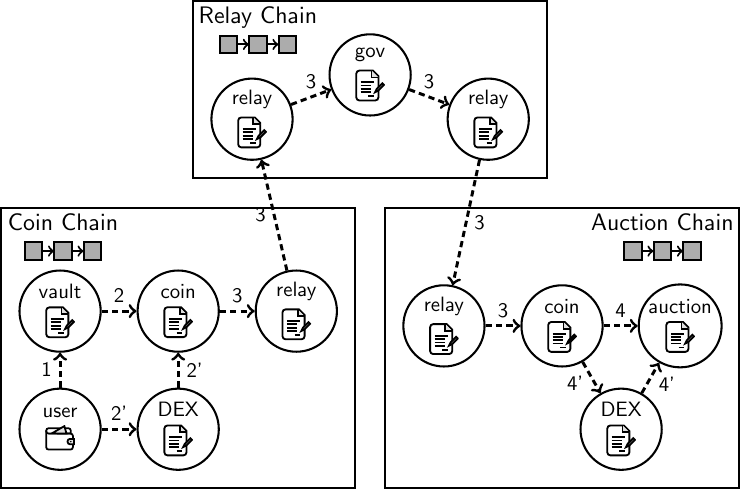}\label{fig:end_to_end_crocodai}}
	\caption{\rdc{Steps of a cross-chain exchange in the CEX baseline and \name{}. The edge labeling corresponds to the step numbers in \Cref{sec:end_to_end}.
    }}
  \label{fig:end_to_end}
\end{figure*}

\rdc{\textbf{Comparison.}}
\rdc{In the first approach, we assume the existence of a centralized exchange that can be used to buy and sell stablecoins on the coin chain and auction chain. In particular, the bidder would then complete the following steps: 
\begin{enumerate}
    \item Create an account with the CEX if she doesn't have one already.
    \item Transfer her tokens to the CEX to add it to her wallet on the CEX.
    \item Sell the tokens in her account and buy stablecoins, or swap the tokens for stablecoins directly if supported by the CEX, and withdraw the stablecoins to her account on the auction chain.
    \item Place a bid by transferring the stablecoins to the auction contract.
\end{enumerate}
In \name{}, the bidder would complete the following steps: 
\begin{enumerate}
    \item Deposit her tokens into a \name{} vault.
    \item Create new \name{} stablecoins and withdraw them to her account.
    \item Issue request to the relay chain for a cross-chain transfer of her \name tokens on the coin chain to an account owned by her on the auction chain.
    \item Place a bid by transferring the stablecoins to the auction contract.
\end{enumerate}
The fundamental difference between the two approaches is the reliance on a trusted third party in the first approach. By contrast, the second approach relies on a decentralized relay chain. Furthermore, the CEX baseline is more likely to suffer from a lack of liquidity: whereas steps 1 and 2 in the first baseline always rely on the CEX as a counterparty to swap tokens, in \name{} the bidder can always create new stablecoins and transfer them to the auction chain via the relay chain. In terms of fees and transaction delays, there are no reasons to expect major differences between the two approaches, since the cost of sending and processing cross-chain messages is typically small compared to the cost of monitoring the supported coin chains through a blockchain node, and the latter is a requirement for all cross-chain systems.}

\rdc{\textbf{Additional Intermediaries.}
In both approaches discussed above, either the CEX or \name{}'s relay chain acts an an intermediary to facilitate the cross-chain token transfer. Additional steps that involve intermediaries may be needed if 1) the CEX does not support sales of the bidder's tokens, 2) \name{} does not accept the bidder's tokens as collateral (e.g., because they belong to an unsupported type, or because the token type's debt limit would be violated), 3) the auction contract does not accept any token sold by the CEX, or 4) the auction contract does not accept \name{} stablecoins. In those cases, a DEX may exists that allows for on-chain swaps between unsupported and supported tokens. The additional steps are involving DEX swaps are depicted in \Cref{fig:end_to_end} through step $2'$, which replaces step 2, and step $4'$, which replaces step $4$.}



\rdc{\textbf{Other Approaches.}}
\rdc{A cross-chain auction can also be implemented using the framework of cross-chain deals \cite{herlihy2021cross}. However, this requires that the auctioneer creates smart contracts on each of the supported coin chains, instead of the just the auction chain. Finally, the bidder could create a wrapped version of her tokens on the coin chain, but this would typically require another trusted intermediary and is therefore similar to the CEX baseline.}




\subsection{\rdc{Deployment}}

\rdc{Real-world use of \name{} would require the deployment of the relay chain, the smart contracts (\rdd{e.g., coin and vault contracts}) on the supported smaller chains, and the development of tools that allow users to interact with the system. The costs of developing and deploying the relay chain are similar to those of existing bridges such as Celer\footnote{\url{https://cbridge.celer.network/bridge/ethereum-polygon/eth/}} and Relay\footnote{\url{https://relay.link/bridge/ethereum}}. The deployment of smart contracts has roughly the same cost as in a fully siloed system, and are small over long run compared to fees (e.g., at a gas price of 20GWei and an ETH price of \$3000, deploying a 1m gas contract would cost \$60). The throughput and scalability of the relay chain are highly dependent on the design choices and  implementation, e.g.,   Celer's cBridge claims a latency of at most 20 minutes during peak demand periods \cite{cbridge}, whereas the zkBridge prototype has a latency of under 2 minutes \cite{xie2022zkbridge} but at a high computational cost, while the Hephaestus protocol \cite{belchior2023hephaestus} claims to be able to process 600 cross-chain transactions in under 6 seconds.}

\rdc{In practice, we expect the user interface design to be very similar to existing designs for the DSS: web apps such as summer.fi\footnote{ \url{https://summer.fi/}} allow users to interact with vaults and to process local and cross-chain transfers. The incentives for developers are the same as for Dai, i.e., a combination of fees, publicity (an app can draw attention to other services), and in some cases volunteer work.
The active DAI ecosystem, including summer.fi, daistats.com \cite{daistats}, DeFiExplore,\footnote{\url{https://maker.defiexplore.com}} demonstrates that such incentives are sufficient in practice. }

\rdc{Finally, the system parameters in \name{} can be set to similar values as in the DSS. The overcollateralization rates, i.e., $\gamma$, will likely be in the order of $30-50\%$ for the supported token types, as they are in existing Dai vaults \cite{daistats}. Similarly, the debt limits for smaller blockchains are expected to be similar to those for ERC20 tokens such as Aave and Gnosis -- as in the DSS, these are proposed and voted on by governance token holders.}

\section{Security Analysis}
\label{sec:security_analysis}
In this section, we analyze whether our system satisfies the following technical properties from \Cref{sec:requirements}: \textit{safety} and \textit{liveness}, \textit{full backing} under the threat model, and \textit{safety of collateral}. We first discuss our threat model and then present our results. In particular, we prove that if tokens whose total debt limit\footnote{We note that price crashes in different tokens and platforms may occur simultaneously, e.g., the collapse of FTX also had consequences for other platforms such as Solana.} is $\zeta'$ suffer a price crash (\Cref{sec:token_crash}) or platform-level attack (\Cref{sec:compromised_coin_chain}), then full backing continues to
hold if $\theta < \gamma(1-\zeta')$, where $\gamma$ denotes the liquidation ratio (see \Cref{sec:minimal_dai}). In practice, this allows stablecoin holders to judge whether they find the risk of system failure low enough given their estimations of $\theta$ and $\zeta'$. We conclude by comparing the security of \name{} to the two baselines from the introduction. The technical proofs of the main theorems can be found in \Cref{sec:proofs}.

\rda{The following notation is used throughout this section and \Cref{sec:financial_analysis}. Token prices are represented in terms of their exchange rate with a non-volatile base asset, e.g., the US dollar (the \textit{numéraire}). Let $M$ denote the set of all tokens supported by \name{}. We denote the base price of token $m \in M$ at time $t > 0$ by $p_{m\,t}$, and the initial base price of token $m$ by $p_{m\,0}$. (Here, $t=0$ represents an arbitrary start point of time.) Given a batch of $k \geq 0$ tokens of type $m$, the \textit{value} of this batch at time $t$ is given by $k p_{m\,t}$. We assume that the price of token $m$ at time $t$ returned by an honest price oracle $o \in [O]$ equals $p_{m\,o\,t} \triangleq p_{m\,t} + \Delta_{m\,o\,t}$, where $\Delta_{m\,o\,t}$ is a random variable with a normal/Gaussian distribution \cite{ross2011elementary} with mean $\mu_{m\,o} \in \mathbb{R}$ and variance $\sigma^2_{m\,o} > 0$. The assumption of normality is in accordance with \Cref{sec:threat_model}, and is motivated by the common assumption that measurement errors have a normal distribution, e.g., in linear regression \cite{dobson2018introduction}.}

\subsection{\rdd{Security} Model}
\label{sec:threat_model}
\rdd{In the following, we first discuss the threat assumptions that are specific to \name{}, and then discuss the threats and countermeasures shared with generic stablecoin and cross-chain systems.}

In \rdd{\name's security} model, a majority of the governance token holders are honest and online, and the relay chain satisfies safety and liveness. Each underlying blockchain is either compromised or not. If it is not compromised, then we assume that it guarantees safety and liveness for all of its transactions. However, both safety and liveness can be violated in a compromised blockchain -- in particular, transactions in which collateral is sent to/from a CDP can be rolled back. We assume that the underlying communication networks satisfy partial synchrony, i.e., bounded delivery times, \rdd{to ensure that nodes do not wait indefinitely for information to be transmitted between chains. It has shown that liveness cannot be guaranteed in the presence of even a single faulty node in a fully asynchronous model \cite{fischer1985impossibility}. We do note that \name{} itself provides synchrony to the overall cross-chain process by requiring that a transfer is completed on the sending chain before processing it on the receiving chain, as depicted in \Cref{fig:end_to_end_crocodai}.}
For the price oracles, a majority is online and honest at all times, in which case they report a measurement error that is random and normally distributed. \rda{Furthermore, the output of each oracle is independent of the output of other oracles.} Dishonest oracles can report arbitrary prices. \rdd{We emphasize that these assumptions are common in the scientific literature, e.g., Byzantine fault tolerant protocols also require honest and online (super)majorities and partial synchrony \cite{lamport1982byzantine}. }

\rdd{As a cross-chain stablecoin system, \name also inherits threats that affect generic blockchains, cross-chain infrastructures, and stablecoins. Recent surveys that investigate the security of such systems are available for individual blockchains \cite{homoliak2020security,zhang2019security}, cross-chain systems \cite{belchior2021survey,zamyatin2021sok,herlihy2021cross}, and stablecoins \cite{mell2023understanding}. Threats that affect individual blockchains include the following categories:}
\begin{itemize}
    \item \rdd{\textit{Network-layer threats}, which may prevent users from connecting with honest blockchain nodes, potentially violating safety (if the remaining nodes are malicious) or liveness. Examples include eclipse attacks on blockchain peers, DNS/routing attacks, and Denial-of-Service (DoS) attacks. Countermeasures include generic network security proposals such as DNSSEC and BGPsec, or blockchain-specific defences such as SABRE \cite{apostolaki2019sabre}.} 
    
    \rdd{Consensus protocols that announce a single block proposer beforehand -- e.g., many Proof-of-Stake protocols \cite{brown2019formal} -- may be vulnerable to targeted DoS attacks against the block proposer, violating liveness. \name{} itself does not introduce such a vulnerability because cross-chain transfers can be initiated by any of the relay nodes.}
    
    \item \rdd{\textit{Consensus-layer threats}, which may cause established transactions to be reverted or the network to accept incorrectly formed blocks. Examples include majority (51\%) attacks or selfish mining. Countermeasures include protocol-level defences, e.g., finality/checkpointing protocols, and robust incentive mechanism design.}
    
    \rdd{Finality protocols are also helpful to \name{} implementations, because \name{} needs a logic for each supported chain to determine when a cross-chain transfer has been processed (i.e., highly unlikely to be reverted). Finality protocols provide such a logic by default.}
    
    \item \rdd{\textit{Contract-layer threats}, which may allow attackers or malicious owners to subvert the intended functionality of a smart contract. Examples include smart contract bugs or deliberate backdoors. Countermeasures include static code analysis or formal verification, potentially run by third parties in the case of malicious owners or insider threats. \name{}'s similarity to the DSS suggests that we can leverage a larger market for smart contract validation services.}
    
    \item \rdd{\textit{Application-layer threats}, e.g., bugs or backdoors in applications that support the stablecoin system, such as tools that allow users to interact with CDPs. As with contract-layer threats, countermeasures include static code analysis and formal verification.}
\end{itemize}
\rdd{The main threats to stablecoin systems as enumerated by researchers at NIST \cite{mell2023understanding} are categorized as security, stability, and trust ``issues''. The following security ``issues'' are mentioned: \begin{inparaenum}
\item Unauthorized or Arbitrary Minting of Stablecoins,
\item Collateral Theft,
\item Malicious Smart Contract Update and Hijack,
\item Data Oracles,
\item Exploiting the Underlying Blockchain, and
\item Writing Secure Software and Vulnerabilities.
\end{inparaenum}
Of these threats, (1), (2), and (5) are covered by the analysis in Section~\ref{sec:compromised_coin_chain}, as all scenarios effectively result in threats to the system's overcollaterization. Finally, (4) is covered by the analysis in Section~\ref{sec:compromised_price_feeds}, whereas (3) and (6) are generic blockchain threats as discussed previously, which can be addressed using the countermeasures mentioned in \cite{homoliak2020security,zhang2019security}.}

\rdd{Security threats to generic cross-chain systems are an active area of research \cite{belchior2021survey,zamyatin2021sok}, although some threats such as cross-chain replay attacks and data availability violations have been studied in the scientific literature. We discuss the fundamental properties of safety and liveness in \name{} in more detail below.
}

\subsection{Technical Safety and Liveness} 
For local operations, the consensus protocol of the coin chain on which the local transfer occurs ensures the safety and liveness of operations on this chain -- see, e.g., \cite
{garay2015bitcoin} \rda{ -- under the assumption of partial synchrony}. This includes local stablecoin transfers, CDP operations, auctions, and interest withdrawals. As such, tokens cannot be stolen during local transfers or CDP operations or locked forever in a CDP unless the coin chain is compromised. For cross-chain operations, safety is violated if tokens in a bottom-level relay are burnt without a corresponding amount being minted on the receiving chain, and liveness is violated if tokens are held in escrow  forever. However, the relay chain satisfies safety and liveness in our threat model, i.e., the transfer is eventually committed and coins are burned and minted on the outgoing and incoming chains, respectively. An alternative way to reach this conclusion is to view cross-chain transfers in \name as simple cross-chain deals \cite{herlihy2021cross} with the sender as the only party who votes, and the relay chain as a certified blockchain -- depending on the design (as discussed in \Cref{sec:relay_chain_designs}), either $2f+1$ nodes or the smart contracts on the relay chain sign the commit vote. 


To ensure safety and liveness, we emphasize that it is essential that the relay network is not compromised. A compromised relay network can burn coins without creating a corresponding amount on the receiving chain, violating user safety. Likewise, if the relay network is unable to process requests because more than $f$ relay nodes are offline or malicious, then the system does not satisfy liveness as cross-chain transfer requests may never happen. We also note that if the relay network is compromised, then full backing can be violated by creating an incoming transfer request without a matching outgoing transfer request, thus creating tokens without collateral. Although the security of the \textit{relay} chain is therefore essential to \name, we investigate the impact of compromised \textit{coin} chains on full backing below.


\subsection{Sudden Crash of Tokens}
\label{sec:token_crash}
Occasionally, the price of a cryptocurrency token may suddenly collapse, e.g., because of the discovery of a serious bug in one of its smart contracts, or because of off-chain events such as a bankruptcy or a legal inquiry into the organization responsible for its maintenance. As an example: the bankruptcy of Celsius caused the value of the CEL token to collapse, whereas the inquiry into FTX did the same for the FTT token and Serum (SRM). In this case, the collateral auctions are likely to fail to generate enough revenue and the system's over-collateralization is under threat. In the following, we consider a worst-case scenario: the price of some tokens falling to zero within a single time slot, i.e., between a given time slot $t'$ and time slot $t''$ -- for simplicity, we assume $t'=0$ and $t''=1$. We show that \name remains fully backed at time $t''$ if it is sufficiently over-collateralized and if the total debt created using tokens on the affected chains is limited.

Formally, \rda{let $\mathcal{J}_t$ denote the set of all CDPs at time $t$, $c_{j\,t}$ the total number of collateral tokens held in CDP $j \in \mathcal{J}_t$ at time $t$, and $D_{j\,t}$ the outstanding debt resulting from Dai withdrawals/deposits from/to CDP $j \in J_{t}$ at time $t$. We denote the token type associated with CDP $j$ by $y_j \in M$.
Let $C_t^* = \sum_{j \in J} p_{y_j \, t} c_{j\,t}$ denote the total value of the system's collateral at time $t$ and $D_t^* = \sum_{j \in J} D_{j\,t}$ the total debt. 
Let $M' \subset M$ denote the set of affected tokens} and \mbox{$\zeta' = \sum_{m \in M'} \zeta_m$} the total debt ceiling of the affected tokens. Let 
\begin{equation*}
C^{-}_t = \sum_{\substack{j \in \mathcal{J}_t \\ y_j \in M'}} p_{y_j\,t} c_{j\,t}, \;\;\;\;\; C^{+}_t = \sum_{\substack{j \in \mathcal{J}_t \\ y_j \notin M'}} p_{y_j\,t} c_{j\,t}
\end{equation*}
 i.e., $C^{-}_t$ represents the total value of the collateral tokens affected by the collapse, and $C^{+}_t$ the total value of the unaffected tokens. Similarly, let
\begin{equation*}
D^{-}_t = \sum_{\substack{j \in \mathcal{J}_t \\ y_j \in M'}} D_{j\,t}, \;\;\;\;\; D^{+}_t = \sum_{\substack{j \in \mathcal{J}_t \\ y_j \notin M'}} D_{j\,t}
\end{equation*}
 so that $D^{-}_t$ represents the debt associated with affected CDPs and $D^{+}_t$ the debt of unaffected CDPs at time $t$.
We consider an attack that occurs at time $1$, so that $C^-_{1} = 0$, $C^+_{1} = C^+_{0}$, $D^-_{1} = D^-_{0}$, and $D^+_{1} = D^+_{0}$. We can then prove the following theorem (see \Cref{sec:proofs}):

\begin{theorem}
Let $C^-_{1} = 0$, $C^+_{1} = C^+_{0}$, $D^-_{1} = D^-_{0}$, and $D^+_{1} = D^+_{0}$. If $p_{y_j\,0} c_{j\,0}/y_{j\,0} > \gamma$ for all $j \in \mathcal{J}_0$, then 
$C^*_{1}/D^*_{1} > \gamma(1 - \zeta')$, i.e., full backing holds at time $t=1$ if $\theta < \gamma(1 - \zeta')$.
\label{thm:sudden_price_drop}
\end{theorem}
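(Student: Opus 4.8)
The plan is to chain three facts that all hold at time $t=0$: the per-CDP liquidation criterion, the per-token debt ceilings, and the hypothesis that the crash zeroes out only the affected collateral while leaving every debt untouched. First I would read the liquidation criterion of \Cref{sec:system_model} as $p_{y_j\,0}\, c_{j\,0} > \gamma\, D_{j\,0}$ for every $j \in \mathcal{J}_0$ (i.e., interpreting the denominator $y_{j\,t}$ there as the outstanding debt $D_{j\,t}$, consistent with the informal description of $\gamma$ as a collateral-to-debt ratio). Summing this over the \emph{unaffected} CDPs, those with $y_j \notin M'$, gives $C^+_0 > \gamma\, D^+_0$. Since the hypotheses give $C^-_1 = 0$ and $C^+_1 = C^+_0$, this already yields $C^*_1 = C^+_0 > \gamma\, D^+_0$.

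Next I would use the debt ceilings to bound $D^+_0$ from below in terms of the total debt. The DSS execution model requires $\sum_{j \in \mathcal{J}_0,\, y_j = m} D_{j\,0} < \zeta_m\, D^*_0$ for each $m$; summing over $m \in M'$ gives $D^-_0 < \zeta'\, D^*_0$ and hence $D^+_0 = D^*_0 - D^-_0 > (1 - \zeta')\, D^*_0$. (This incidentally shows that, whenever $\zeta' < 1$, the set of unaffected CDPs is non-empty, so the summed liquidation inequality of the previous paragraph is genuinely strict.) Because the crash leaves all debts unchanged, $D^*_1 = D^-_1 + D^+_1 = D^-_0 + D^+_0 = D^*_0$, so combining the two bounds:
\[
C^*_1 = C^+_0 > \gamma\, D^+_0 > \gamma(1 - \zeta')\, D^*_0 = \gamma(1 - \zeta')\, D^*_1,
\]
which is exactly $C^*_1 / D^*_1 > \gamma(1 - \zeta')$. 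The final clause is then immediate: if $\theta < \gamma(1 - \zeta')$ then $C^*_1/D^*_1 > \theta$, so \eqref{eq:full-backing} holds at $t=1$.

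There is no genuine computational obstacle here; the only things to watch are (i) the bookkeeping of strict versus non-strict inequalities — every step above is strict, and strictness is preserved because the liquidation criterion is strict and (as noted) there is at least one unaffected CDP in the non-vacuous regime $\zeta' < 1$ — and (ii) the fact that we never need to reason about the \emph{affected} CDPs at all: their post-crash collateral is simply dropped from the lower bound on $C^*_1$, so in particular we do not rely on their (now-failing) collateral auctions, nor on any assumption about how the affected CDPs are unwound. I would also remark that the argument invokes the debt-ceiling and liquidation invariants only at time $0$, so the resulting bound is insensitive to the precise order of events during the crash, and that it degrades gracefully: if the affected collateral only dropped to some fraction of its value rather than to zero, the same computation would give a correspondingly larger lower bound on $C^*_1/D^*_1$.
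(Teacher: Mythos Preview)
Your proof is correct and follows essentially the same route as the paper's: bound $C^+_0$ below by $\gamma D^+_0$ via the per-CDP liquidation condition, bound $D^+_0$ below by $(1-\zeta')D^*_0$ via the debt ceilings, and combine. The only cosmetic difference is that the paper packages the first step through a convex-combination lemma on ratios (\Cref{lm:convex_combination}) and rewrites the debt-ceiling bound as $D^-_0 < \tfrac{\zeta'}{1-\zeta'}D^+_0$, whereas you sum the inequalities $p_{y_j,0}c_{j,0} > \gamma D_{j,0}$ directly and keep the bound in the form $D^+_0 > (1-\zeta')D^*_0$; these are algebraically equivalent and your version is, if anything, slightly more direct.
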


\subsection{Individual Compromised Coin Chain} 
\label{sec:compromised_coin_chain}
If a majority of consensus power on a coin chain is compromised then the consensus nodes could perform a 51\% attack in the following way: send some amount of stablecoins to another chain, wait for the transaction to be committed by the relay network, then fork the compromised coin chain and undo the outgoing transfer request and the burning of the blockchain tokens. Similarly, they could undo the funding of a CDP. As a consequence, stablecoins are created on another chain without backing collateral. In the worst case, this attack would be performed simultaneously on all of the token types supported by the compromised chain. Again, let $M' \subset M$ denote the set of affected tokens. We again consider a worst-case attack that has been completed at $t=1$. Let $h'$ be the total amount of stablecoins created in the attack, so that 
$
D^{-}_{1} = D^{-}_{0} + h'.
$ 
We can then prove the following theorem (see \Cref{sec:proofs}):
\begin{theorem}
Let $C^-_{1} = C^-_{0}$, $C^+_{1} = C^+_{0}$, $D^-_{1} = D^{-}_{0} + h'$, and $D^+_{1} = D^+_{0}$. If $p_{y_j\,0} c_{j\,0}/y_{j\,0} > \gamma$ for all $j \in \mathcal{J}_0$, then $C^*_{1}/D^*_{1} > \gamma(1 - \zeta')$, i.e., full backing holds at $t=1$ if $\theta < \gamma(1 - \zeta')$.
\label{thm:corrupted_chain}
\end{theorem}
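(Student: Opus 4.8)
The plan is to follow the same template as the proof of Theorem~\ref{thm:sudden_price_drop}, but with the roles reversed: there the collateral side $C^-$ was perturbed (dropped to $0$) while the debt was held fixed, whereas here the collateral is conserved and instead the debt side is inflated by $h'$. So the debt-ceiling bound has to be applied to the inflated quantity $h'$ rather than to the pre-attack debt of the affected CDPs. First I would note that the hypothesis $p_{y_j\,0} c_{j\,0}/D_{j\,0} > \gamma$ for every $j \in \mathcal{J}_0$, summed over all CDPs, gives $C^*_0 > \gamma D^*_0$ (the same opening move as before). Combining this with the assumed collateral conservation $C^*_1 = C^-_1 + C^+_1 = C^-_0 + C^+_0 = C^*_0$ and with $D^*_1 = D^-_1 + D^+_1 = D^*_0 + h'$, I obtain
\[
\frac{C^*_1}{D^*_1} \;=\; \frac{C^*_0}{D^*_0 + h'} \;>\; \gamma\,\frac{D^*_0}{D^*_0 + h'}.
\]

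Next I would bound $h'$ from above using the debt ceiling. The $h'$ stablecoins were minted against CDPs on the compromised chain $b'$, so at the moment the attacker finished creating them (before forking away the corresponding collateral), the total debt attributed to tokens in $M'$ was $D^-_0 + h'$ while the system-wide debt was $D^*_0 + h'$; hence the ceiling constraint $\sum_{j:\,y_j \in M'} D_{j} < \zeta' \sum_{j} D_{j}$ yields $D^-_0 + h' < \zeta'(D^*_0 + h')$, and since $D^-_0 \ge 0$ this simplifies to $h' < \tfrac{\zeta'}{1-\zeta'}\,D^*_0$. Substituting this into the previous display and using that $x \mapsto D^*_0/(D^*_0 + x)$ is decreasing gives
\[
\frac{C^*_1}{D^*_1} \;>\; \gamma\,\frac{D^*_0}{D^*_0 + \frac{\zeta'}{1-\zeta'}D^*_0} \;=\; \gamma(1-\zeta'),
\]
which is exactly the claimed bound; and if in addition $\theta < \gamma(1-\zeta')$, then $C^*_1/D^*_1 > \theta$, i.e.\ \eqref{eq:full-backing} holds at $t=1$.

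The main obstacle I anticipate is the bookkeeping around what the debt ceiling actually constrains after the fork: the $h'$ stablecoins persist on other coin chains even though the CDPs backing them have been forked away, so one must argue carefully that the on-chain ceiling check that held at the time the debt was created still controls the global debt figures entering $D^*_1$ (rather than, say, the post-fork on-chain state of $b'$, which no longer reflects the attacker's debt). Beyond that point the argument is the routine monotone-function manipulation already used in Theorem~\ref{thm:sudden_price_drop}. I would also sanity-check the boundary case $\zeta' \to 1$: there the $h'$-bound degenerates, but then $\gamma(1-\zeta') \to 0 < \theta$ so the hypothesis $\theta < \gamma(1-\zeta')$ is unsatisfiable and the statement is vacuous, which is consistent.
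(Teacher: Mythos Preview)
Your proposal is correct and follows essentially the same approach as the paper: apply the debt-ceiling constraint $D^-_0 + h' < \zeta'(D^*_0 + h')$ together with the per-CDP overcollateralization hypothesis, then conclude $C^*_1/D^*_1 > \gamma(1-\zeta')$. The only cosmetic difference is that the paper keeps the affected/unaffected split throughout (bounding $D^-_0 + h' < \tfrac{\zeta'}{1-\zeta'} D^+_0$ and using $C^+_0/D^+_0 > \gamma$ together with $C^-_0 \ge 0$), whereas you aggregate over all CDPs (using $C^*_0/D^*_0 > \gamma$ and weakening the ceiling bound to $h' < \tfrac{\zeta'}{1-\zeta'} D^*_0$ by dropping $D^-_0$); both routes yield the identical final inequality.
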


\subsection{Compromised Price Feeds} 
\label{sec:compromised_price_feeds}
Finally, we consider an attack in which a dishonest minority of price oracles tries to corrupt the price maintained by a vault. This could cause CDPs to be liquidated even if they are not underwater, thus violating safety of collateral. \rdc{We note that the DSS already has implemented safety
mechanisms for corrupted price oracles, e.g., its Oracle Security Module (OSM). The OSM\footnote{\url{https://docs.makerdao.com/smart-contract-modules/oracle-module/oracle-security-module-osm-detailed-documentation}} increases the available time to detect corrupted price feeds by adding an artificial delay before price information is accepted by the system. Such safety mechanisms are complementary to our analysis in the sense that they make individual price oracles harder to compromise, whereas our analysis shows that attacks against the system as a whole have negligible probability of success if only a minority of oracles is compromised. Furthermore, \name{}’s inclusion of global price oracles on minor chains provides another complementary safety mechanism. After all, if price oracles from established global sources are included, then it is harder for attackers to compromise a majority among the full set of oracles than if this set only includes local sources from the minor chain’s less developed DeFi ecosystem.}

Let $p$ be the base price for token $m \in \rda{M}$ at  time $t \geq 0$ (we omit $m$ and $t$ from the notation in this section as they are not relevant to the analysis). Let \rda{$O$ denote the set of all oracles} and $p_o$ the price returned by oracle $o \in O$, so that $p_o = p + \Delta_o$ \rda{where $\Delta_o$ is a normally distributed random variable as per \Cref{sec:threat_model}}. In the following, we assume some subset $O' \subset O$ is controlled by malicious parties who aim to convince \rda{a \name{} vault to accept} a price $p'$ that is different from the base price $p$. Let $p^*$ be the actual median price among all price feeds, including those that have been distorted. We can then prove the following theorem (see \Cref{sec:proofs}):

\begin{theorem}
If $|O'| < M/2$, then the probability that the difference between the base price $p$ and the median price $p^*$ exceeds a constant $c > 0$ vanishes exponentially as $c$ increases.
\label{thm:corrupted_price_feeds}
\end{theorem}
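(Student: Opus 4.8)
The plan is to reduce the event that the reported median $p^*$ is far from the base price $p$ to the event that some \emph{honest} oracle's measurement noise is large, and then dispatch that with a union bound and a Gaussian tail estimate. (Here the hypothesis should be read as the stated minority condition on the $O$ oracles, i.e.\ $|O'| < O/2$, consistent with ``a dishonest minority of price oracles'' above.)

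First I would fix the token $m$ and time $t$ and drop them from the notation as in the statement, writing $n = O$, $k = |O'| < n/2$, and $h = n - k > n/2$ for the number of honest oracles. Each honest oracle $o$ reports $p_o = p + \Delta_o$ with $\Delta_o$ Gaussian of mean $\mu_o$ and variance $\sigma_o^2$, while the oracles in $O'$ may report arbitrary values, possibly chosen adaptively after observing the honest reports. Let $\sigma_{\max} = \max_{o \in [O]} \sigma_o$ and $\mu_{\max} = \max_{o \in [O]} |\mu_o|$; since $O$ is finite these are fixed constants not depending on $c$.

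The key step is a purely combinatorial observation: if $p^* > p + c$, then at least one honest oracle reports a value exceeding $p + c$. Indeed, if every honest oracle reported at most $p + c$, then at most $k < n/2$ of the $n$ reports (those from $O'$) could exceed $p + c$, so strictly more than half of the reports lie at or below $p + c$; since the $h$ reports with value $\le p+c$ occupy the lowest positions in sorted order, this pins the median $p^*$ at or below $p + c$ as well --- whether $n$ is odd (median $=$ the $(n+1)/2$-th smallest report) or even (median $=$ average of the two middle reports). The symmetric argument on the other tail shows that $p^* < p - c$ forces some honest oracle to report below $p - c$. Hence $\{|p^* - p| > c\} \subseteq \bigcup_{o} \{|\Delta_o| > c\}$ with the union taken over honest $o$, and this inclusion uses neither independence of the $\Delta_o$ nor any assumption on the adversary's strategy.

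Finally, for $c > \mu_{\max}$ the Gaussian tail bound gives $\Pr(|\Delta_o| > c) \le 2\exp\!\bigl(-(c - \mu_{\max})^2 / (2\sigma_{\max}^2)\bigr)$, so a union bound over the (at most $n$) honest oracles yields $\Pr(|p^* - p| > c) \le 2n\exp\!\bigl(-(c - \mu_{\max})^2 / (2\sigma_{\max}^2)\bigr)$, which vanishes (super-)exponentially as $c \to \infty$, as claimed. The proof is short; the only steps needing any care are the order-statistic bookkeeping when $n$ is even and carrying the nonzero means $\mu_o$ through the tail estimate, both routine. If there is a ``main obstacle'' at all, it is phrasing the result so the bound is manifestly uniform over adversarial strategies --- which the combinatorial step already supplies, since it refers only to how many malicious reports fall on each side of the threshold, never to their values.
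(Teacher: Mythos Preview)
Your proof is correct and takes a somewhat different, more elementary route than the paper's. The paper (assuming $O$ odd) identifies the reported median under a worst-case attack as a particular order statistic of the honest reports and then bounds its upper tail by the product $\prod_{o\notin O'} R\!\left((p^*-p)/\sigma_o\right)$, appealing to independence of the $\Delta_o$; this yields a bound of order $R^{\,O-|O'|}$. You instead use the pigeonhole observation that a far-off median forces at least one honest report to be far off, and then apply a union bound, obtaining roughly $O\cdot R$. Your estimate is nominally weaker, but it is actually the correct order in the extremal case $|O'|=\lfloor (O-1)/2\rfloor$, where a single honest outlier already suffices to shift the median; and it buys you several robustness properties the paper's argument does not: no independence of the $\Delta_o$ is needed, the even-$O$ case is handled uniformly, the nonzero biases $\mu_o$ are carried through explicitly, and the bound is manifestly uniform over all adversarial strategies. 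Both routes deliver the exponential decay claimed in the theorem.
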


\subsection{Comparison to Baselines}
\label{sec:baseline_comparison_analysis}

In a fully siloed baseline, each coin chain has its own DSS-like stablecoin. 
If the chain is not compromised, then local and cross-chain transfers are secure (i.e., tokens cannot be stolen or locked permanently), whereas they are not if the chain is compromised. In this sense, the baseline is similar to \name{}.
The fully siloed system is unaffected by price crashes or platform failures on other blockchains -- however, it is less resilient to price crashes and platform failures on its own chain than a system with more diversified collateral: e.g., in \name, a stablecoin holder on a minor chain that experiences a price collapse would still be able to transfer stablecoins to another chain and exchange them for other tokens. Finally, because \name{} supports additional price oracles, it is more difficult for an adversary to compromise a majority of oracles than in the baseline.

The baseline that consists of the DSS with a wider set of wrapped tokens has the same security guarantees as \name{}, as it is similarly able to diversify its collateral. However, we will see in \Cref{sec:implementation} that this baseline has lower efficiency.

\section{Financial Analysis}
\label{sec:financial_analysis}
The main system requirement from \Cref{sec:requirements}, full backing, can be violated not just due to critical attacks and platform failures, but also due to natural price fluctuations in token prices. This vulnerability to price fluctuations is compounded by the strong correlation between the prices of different cryptocurrencies (see \Cref{fig:all_correlations} in the appendix). In this section, we use a financial risk model to estimate the probability that the system violates full backing over a 1-day period. 
As is common in financial analysis, we will refer to a basket of collateral tokens as a \textit{portfolio}, and to the relative increase in value of a token type over one time slot as a \textit{return}.
For our risk analysis, we compare a collateral portfolio that approximates the collateral composition of the existing DSS (without other stablecoins) to a portfolio made up of a wider range of blockchain tokens. 
In the following, we first present our model for token prices, and then present our risk analysis for the DSS-based portfolios.

\subsection{Token Price Model} 
\label{sec:token_price_model}

\rda{As mentioned in \Cref{sec:security_analysis}, we denote the base price of token $m \in M$ at time $t > 0$ by $p_{m\,t}$.} To model the evolution of token prices over time, we use \textit{Geometric Brownian Motion} (GBM). This is a common choice in the financial literature on asset prices \cite{ross2011elementary} and related work on stablecoins, e.g., by Salehi et al.\ \cite{salehi2021red}, particularly due to the GBM's role in the Black-Scholes formula for option prices.  A stochastic process $S_t$
that follows GBM is defined using a stochastic differential equation:
\begin{equation}
dS_t = \mu S_t dt + \sigma S_t d W_t,
\label{eq:gmb_definition}
\end{equation}
where $\mu$ determines the drift of $S_t$ (i.e., its expected change per unit of time), $\sigma$ represents the volatility of $S_t$ (i.e., the variance of the change), and
$W_t$ is the Wiener process that represents standard Brownian motion \cite{ross2011elementary}. 
In this model, each token price process $p_{m\,t}$, $t \geq 0$, can be represented using a process defined as in \Cref{eq:gmb_definition}. Essentially, this means that the logarithm of the relative difference of token prices, $\log\left(p_{m\,t+t'} / p_{m\,t}\right)$, is normally distributed for all $t' > 0$ with mean and variance dependent on $t'$.
Unlike \cite{salehi2021red}, we consider a setting with multiple assets; i.e., we have a stochastic process $S_{m\,t}$ for each token $m$ with its own mean $\mu_m$ and variance $\sigma^2_m$. In this setting, we must also consider
the \textit{covariance} between the token prices \cite{ross2011elementary}, as cryptocurrency prices are often highly correlated -- we will discuss this in more detail below. 

The standard GBM as decribed above is based on the normal distribution, i.e., $\log(S_{t+t'}/S_t)$ has a normal distribution. However, it is known from the academic literature that the probability distribution of stock price fluctuations is usually more heavy-tailed and that Student’s $t$-distribution is often a better fit \cite{kelly2014tail,li2020review}.  This is also noticeable in the data for price changes in our dataset.
As such, we will focus on the $t$-distribution in our analysis. Although the $t$-distribution does not allow for an analytical solution for the asset price change after each individual time slot,
we can use Monte Carlo simulation to analyze its behavior, as we discuss in more detail below.

\subsection{Risk Analysis: Approach}
\label{sec:financial_analysis_approach}

In this section, we investigate the probability that \name remains fully backed even in the presence of standard price fluctuations. In particular, we will estimate this probability for several choices of supported collateral tokens: 1) a single token type, 2) tokens that exist on the Ethereum blockchain, and 3) tokens that exist on a variety of independent blockchains. The second probability corresponds to the safety of the single-chain DSS, whereas the third option corresponds to the safety of \name. The purpose of this section is to demonstrate that the third option has considerably lower risk than the second, using multi-chain collateral baskets that are similar to those observed for the DSS in practice (see \Cref{tab:dss_parameters} in \Cref{sec:dss_implementation}). We discuss our approach in \Cref{sec:financial_analysis_approach} and present our empirical results in \Cref{sec:financial_analysis_evaluation}.


\textbf{Problem Statement.} Let $\mathcal{J}_{m\,t} = \{j \in \mathcal{J}_m:y_j = m\}$ be the set of CDPs at time $t$ whose token type is $m$, and $\alpha_{m\,t} = \sum_{j \in \mathcal{J}_{m\,t}} c_{j\,t}$ be the total number of collateral tokens of type $m$ at time $t$. In the following, we assume for simplicity that $\alpha_{m\,t} \equiv \alpha_m$, i.e., that the composition of the portfolio does not change during the observation period. This is a reasonable assumption as interactions with CDPs do not occur frequently.
In our model, failure of the stablecoin system occurs if, for any $t$ we have
\begin{equation}
\sum_{m \in M} \alpha_m p_{m\,t} \leq \theta D_t^*,
\label{eq:simulation_failure_criterion}
\end{equation}
where $\theta$ is the required over-collateralization and $D_t^*$ as defined in \Cref{sec:security_analysis}. We introduce a new variable $\gamma'$ that denotes the initial over-collateralization calculated as
$
\gamma' = \frac{\sum_{m \in M} \alpha_{m} p_{m\,t}}{D_{\rdc{0}}^*}.
$
Based on that, we can reformulate the failure condition as:
$
\frac{\sum_{m \in M} \alpha_{m} p_{m\,t}}{\sum_{m \in M} \alpha_{m} p_{m\,0}} \leq \frac{\gamma'}{\theta}.
$
As one would expect, failure does not depend on the absolute size of the portfolio, but on its relative composition.
Our first goal is then to estimate the probability of failure for a choice of $\alpha_1,\ldots,\alpha_{|M|}$, $\gamma'$, and $\theta$. 

In our simulation, we represent the portfolio as relative proportions of different assets. We define a vector $\vec{v}$ that denotes the value of
individual assets in the portfolio as
$
\textbf{v} = (v_1,v_2,\ldots,v_{|M|})^{T},
$
where $v_i$ is value of asset $i$. Similarly, we define a vector of random variables for log returns of individual assets
$
\textbf{X} = (X_1,X_2, \ldots,X_{|M|})^{\intercal},
$
where $X_m$ is the random variable for log returns of asset $m \in M$.
We then simulate the price of each asset as an individual GBM with Student's $t$-distribution for the return after each time slot, as discussed in \Cref{sec:token_price_model}. To accurately simulate price fluctuations, we must consider the correlation between the prices
-- to do so, we use Cholesky decomposition to compute a matrix $\mathbf{L}$ that satisfies
$
\textnormal{Cov}(\mathbf{X}) = \mathbf{L} \mathbf{L}^{T}
$. A matrix of random samples $\mathbf{R}$ can then be
generated as
$
\mathbf{R} = \mathbf{LT}
$,
where the matrix $\mathbf{T}$ contains a $t$-distributed random sample for each asset type and simulated time slot. 

\subsection{Risk Analysis: Evaluation}
\label{sec:financial_analysis_evaluation}

In this section, we analyze the risk of system failure in both single-chain and multi-chain settings using the approach described above. We first describe the data for our model, then the portfolio compositions, and finally the results of our analysis.

\textbf{Description of Data.}
We have collected data using the CoinMarketCap API over six different periods that span a total of 1264 hours of data, separated into 5-minute slots. The duration of these periods can be found in \Cref{tab:empirical_data_table} in \Cref{sec:price_data_table}. The final period includes the collapse of the FTX exchange on November 11, 2022. Our main motivation was to choose a diverse range of market conditions, including a period of turmoil, to ensure that the estimated system failure probabilities represent a worst-case scenario (i.e., in more typical market conditions, \name{} would be safer). In each time slot, we collected price information of 100 tokens, namely those that were the 100 largest in terms of market capitalization during the first time slot. A complete list of the included token symbols, and the observed correlation between them, can be found in \Cref{fig:all_correlations} in \Cref{sec:price_data_table}.

\textbf{DSS-Based Portfolios.}
For our analysis, we have considered single-asset portfolios consisting of Bitcoin and several of the highest-value smart contract platforms, namely: Ethereum, Cardano, Solana, Polkadot, Tron, Avalanche, Algorand,
and EOS -- their cryptocurrency symbols are: BTC, ETH, ADA, SOL, DOT, TRX, AVAX, ALGO, and EOS. The results are displayed in \Cref{tab:results_t} (single-asset). We use the following multi-asset portfolios: 
\begin{enumerate}[leftmargin=15pt]
\item D-All, which approximates the collateral portfolio of the DSS as of early 2023 (see \Cref{tab:dss_parameters} in \Cref{sec:dss_implementation}). (We replace GUSD, which was not in our dataset, with another fiat-backed stablecoin (USDT), and Maker's real-world assets with gold-backed PAXG.) 
\item D-ERC, which approximates the DSS portfolio after removing all fiat-backed assets.
\item D-Mix1 and D-Mix2, which use the same tokens as D-ERC but which increase the proportion of the minor ERC-20 tokens.
\item C-Mix1 and C-Mix2, which replace the ERC-20 tokens in D-Mix1 and D-Max2 with tokens from other popular blockchain platforms: ADA, DOT, TRX, and SOL.
\end{enumerate} 
 \rdc{We use these portfolios to show how the system performs with a mix of different types of collateral: other (fiat-backed) stablecoins, major cryptocurrencies (e.g., ETH/WBTC), and minor ERC-20 tokens. Tokens of the first type typically have much lower volatility, but introduce a dependence on trusted third parties. We will discuss this trade-off in more detail later in this section.}
The exact compositions of the portfolios can be found in \Cref{tab:portfolio_compositions} in \Cref{sec:portfolio_compositions}.

 In our experiments, 
 the expected over-collateralization to avert system failure is $\theta = 1.1$. For each table entry, we calculate
the probability of failure within 1 day, i.e., 288 5-minute time slots, using 100\,000 simulation runs. In each run, {we simulate the evolution of token prices by randomly drawing, in each consecutive time slot, the price change for each token type in the portfolio. 
We then compute the risk of failure as the fraction of runs that resulted in failure among the total number of runs.} 
The results are displayed in \Cref{tab:results_t}. \rdc{Since the 95\% confidence intervals widths are very small for the results in this table, we have omitted them for brevity.}

\begin{table}[b]
\centering
\setlength\tabcolsep{0.3em}
\caption{Chance of failure for single-asset portfolios and initial over-collateralization $\gamma'$.}
\label{tab:results_t}
\makebox[\textwidth][c]{
\scalebox{0.9}{
\begin{tabular}{c|ccccccccc|cccccc}
\multicolumn{1}{c}{} & \multicolumn{9}{c}{single-asset} & \multicolumn{6}{c}{multi-asset} \\
$\gamma'$ & \rot{BTC} &  \rot{ETH} &   \rot{ADA} &   \rot{SOL} &   \rot{DOT} &   \rot{TRX} &  \rot{AVAX} &  \rot{ALGO} &   \rott{EOS} & \rot{D-All} &  \rot{D-ERC} &  \rot{D-Mix1} &  \rot{D-Mix2} &  \rot{C-Mix1} &  \rot{C-Mix2} \\ 
\midrule
1.2 & 0.021 & 0.047 & 0.078 & 0.228 & 0.063 & 0.018 & 0.126 & 0.166 & 0.097 & 0.001 & 0.051 & 0.040 & 0.050 & 0.030 & 0.029 \\
1.3 & 0.003 & 0.008 & 0.014 & 0.093 & 0.012 & 0.003 & 0.029 & 0.051 & 0.016 & 0.000 & 0.007 & 0.005 & 0.006 & 0.003 & 0.003 \\
1.4 & 0.001 & 0.003 & 0.005 & 0.059 & 0.004 & 0.001 & 0.009 & 0.019 & 0.004 & 0.000 & 0.002 & 0.002 & 0.002 & 0.001 & 0.001 \\
1.5 & 0.001 & 0.001 & 0.002 & 0.041 & 0.002 & 0.001 & 0.004 & 0.007 & 0.002 & 0.000 & 0.001 & 0.001 & 0.001 & 0.001 & 0.001 \\
\bottomrule
\end{tabular}
}
}
\setlength\tabcolsep{0.5em}
\end{table}

\begin{table}[b]
\centering
\setlength\tabcolsep{0.3em}
\caption{\rdc{Chance of failure for single-asset portfolios and initial over-collateralization $\gamma'$. In the table, `0.0 $\pm$ 0.0' indicates that system failure events were observed during the 1,000,000 simulation runs, but that the estimated probability has been rounded down to $0$, whereas `---' indicates that no failure events were observed.}}
\label{tab:results_stablecoins_t}
\makebox[\textwidth][c]{
\scalebox{0.9}{
\rdc{
\begin{tabular}{l|ccccc|ccccc}
\multicolumn{1}{c}{} & \multicolumn{5}{c}{single-asset} & \multicolumn{5}{c}{C-Mix1, stablecoin contribution $\delta$} \\
 $\gamma'$ & USDT & USDC & BUSD & TUSD & DAI & $\delta=0.5$ & $\delta=0.25$ & $\delta=0.1$ & $\delta=0.05$ & $\delta=0$ \\
\midrule
vol. & 
    4.85e-5 & 
    9.70e-5 & 
    4.06e-4 & 
    3.48e-4 & 
    2.84e-4 & 
    7.86e-4 & 
    1.17e-3 & 
    1.39e-3 & 
    1.47e-3 & 
    1.55e-3 \\ \midrule
1.2 & 
    \confintv{0.00001}{2.2e-6} & 
    \confintv{0.00009}{9.2e-6} & 
    \confintv{0.00177}{4.2e-5} & 
    \confintv{0.00103}{3.2e-5} & 
    \confintv{0.00078}{2.8e-5} & 
    \confintv{0.00314}{5.6e-5} & 
    \confintv{0.01198}{1.1e-4} & 
    \confintv{0.02153}{1.5e-4} & 
    \confintv{0.02548}{1.6e-4} & 
    \confintv{0.02976}{1.7e-4} \\
1.3 & 
    \confintv{0.0}{0.0} & 
    \confintv{0.00002}{4.7e-6} & 
    \confintv{0.00041}{2.0e-5} & 
    \confintv{0.00027}{1.6e-5} & 
    \confintv{0.00023}{1.5e-5} & 
    \confintv{0.00047}{2.2e-5} & 
    \confintv{0.00137}{3.7e-5} & 
    \confintv{0.00243}{4.9e-5} & 
    \confintv{0.00304}{5.5e-5} & 
    \confintv{0.00345}{5.9e-5} \\
1.4 & 
    --- & 
    \confintv{0.00001}{3.0e-6} & 
    \confintv{0.00021}{1.4e-5} & 
    \confintv{0.00012}{1.6e-5} & 
    \confintv{0.00009}{9.3e-6} & 
    \confintv{0.00018}{1.3e-5} & 
    \confintv{0.00052}{2.3e-5} & 
    \confintv{0.00098}{3.1e-5} & 
    \confintv{0.00105}{3.2e-5} & 
    \confintv{0.00119}{3.5e-5} \\
1.5 & 
    \confintv{0.0}{0.0} & 
    \confintv{0.00001}{2.8e-6} & 
    \confintv{0.00012}{1.1e-5} & 
    \confintv{0.00001}{8.3e-6} & 
    \confintv{0.00001}{7.7e-6} & 
    \confintv{0.00009}{9.6e-6} & 
    \confintv{0.00029}{1.7e-5} & 
    \confintv{0.00046}{2.1e-5} & 
    \confintv{0.00033}{1.8e-5} & 
    \confintv{0.00065}{2.5e-5} \\
\bottomrule
\end{tabular}
}
}
}
\setlength\tabcolsep{0.5em}
\end{table}

From \Cref{tab:results_t}, we observe a considerable difference between the failure probabilities of the single-token portfolios. The most high-risk portfolio was Solana (SOL), which was severely affected by the collapse of FTX, since it was one of FTX's most prominent backers \cite{business_insider}. Even for an initial collateralization  $\gamma'=1.5$, a pure SOL portfolio would have a failure probability of more than $4\%$ after a day, given the data from our observation period. In contrast, BTC and TRX were the most stable cryptocurrencies over this period.

\begin{figure}
    \makebox[\linewidth][c]{
    	\subfloat[Different blockchains]{%
    		\includegraphics[width=0.525\linewidth]{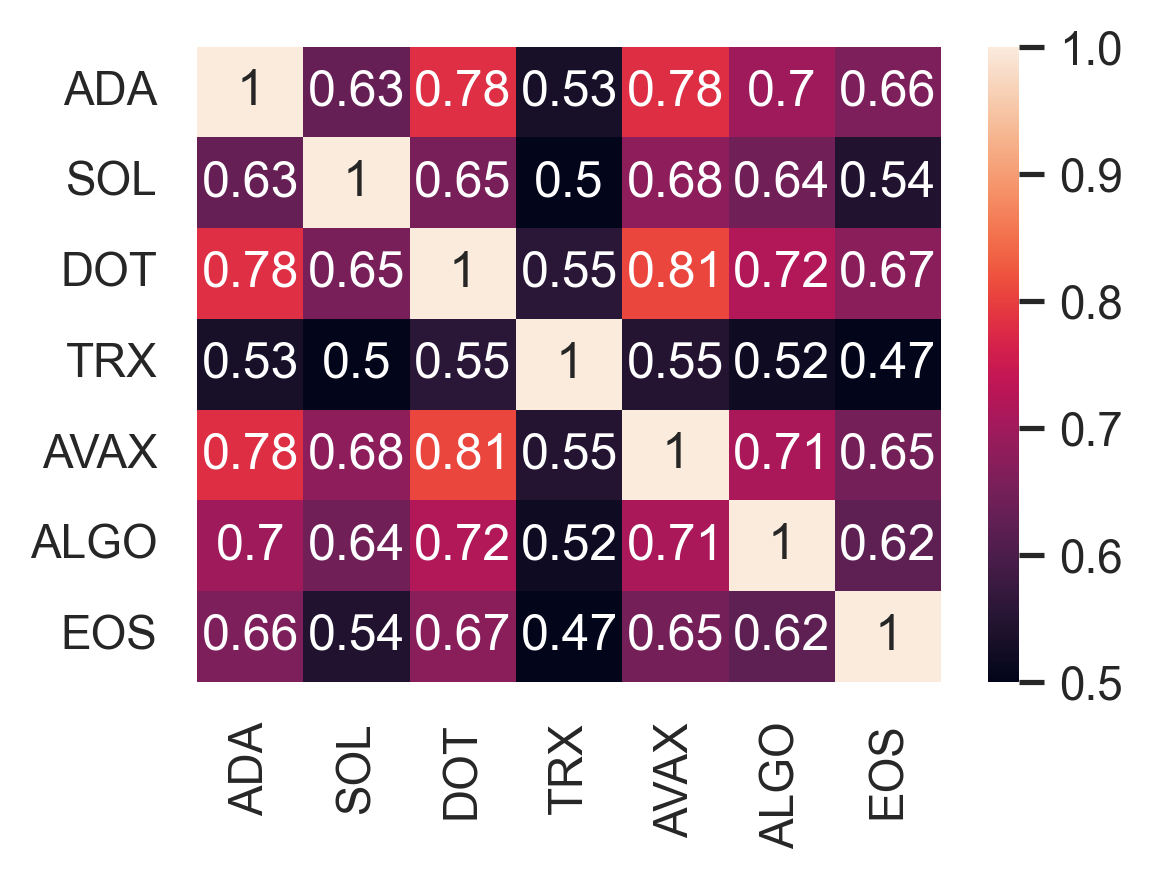}
    	\label{fig:correlations_blockchains} }
    	\subfloat[ERC20 tokens]{%
    		\includegraphics[width=0.525\linewidth]{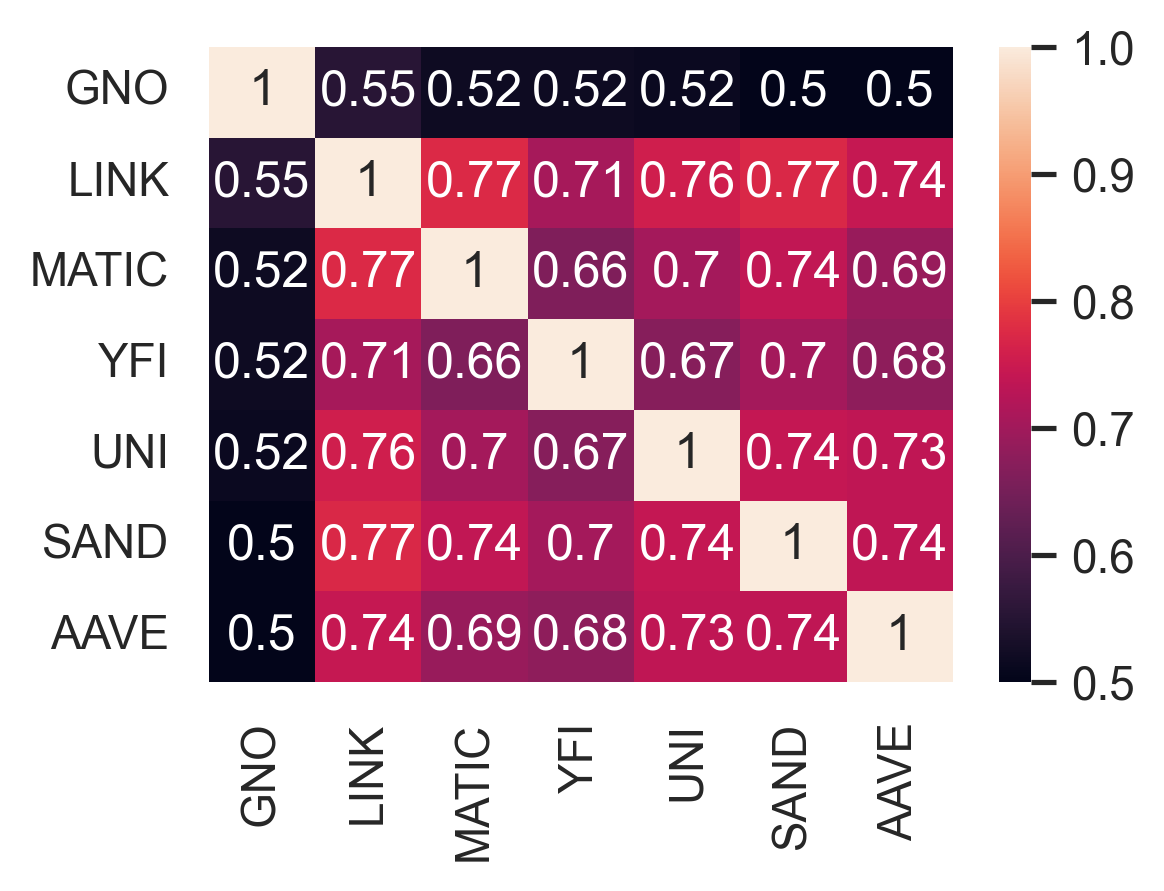}
    	\label{fig:correlations_erc20} }
    }
	\caption{Correlation between the different token types. 
 }
 \label{fig:correlation_heatmaps}
\end{figure}

Furthermore, we observe from \Cref{tab:results_t} that the DSS's approximate portfolio has a very low probability of failure due to $80\%$ of its collateral being stablecoins and other pegged assets. 
In fact, for $\gamma'=1.2$, its observed failure probability (0.001 for D-All) was roughly 40-50 times lower than for the pure crypto-backed portfolios (0.051, 0.040, and 0.050 respectively for D-Erc, D-Mix1, and D-Mix2).
For the pure crypto-backed portfolios, we find that D-ERC has slightly lower failure probabilities than a pure ETH portfolio, which can be explained by the higher share of low-volatility WBTC in the portfolio. Similarly, C-Mix1 and C-Mix2 portfolios have lower risk than the D-ERC due to the inclusion of more WBTC. We do not observe much benefit from the inclusion of more different asset types due to the high correlation between the included tokens. However, when comparing the C-Mix1 and C-Mix2 portfolios we do note that C-Mix2 has a lower failure probability despite the inclusion of less low-volatility WBTC. The reason is the inclusion of TRX, which is not only similarly low-risk as WBTC, but which also has a relatively low correlation with the other asset types -- this can be seen  in \Cref{fig:correlations_blockchains}. We also note that the relative differences become starker when $\gamma'$ increases -- e.g., for $\gamma'=1.2$, the risk of D-Mix1 is roughly a third higher than C-Mix2 (4.0\% vs.\ \mbox{2.9\%}), whereas for $\gamma'=1.3$ the risk is two thirds higher (0.5\% vs. 0.3\%).

\rdc{\textbf{Comparison to Existing Stablecoins.}} \rdc{
From \Cref{tab:results_t}, we observe that D-All has a considerably lower failure risk than the other portfolios.
D-All represents the collateral portfolio of the DSS and  is the only multi-asset portfolio from \Cref{tab:results_t} that includes other stable assets (around 75\% as per \Cref{tab:dss_parameters}).
All stablecoins except Dai in our dataset are fiat-backed stablecoins, which exhibit much smaller price volatility than regular tokens, but rely on trusted third parties which expose the system to external black swan events. We conduct experiments to further understand the price volatility. In particular, we estimate the failure risks for single-stablecoin portfolios and for a range of DSS-based portfolios with varying degrees of included stablecoins. For the latter, we consider modifications of the C-Mix1 portfolio, which represents a portfolio similar to Dai's collateral portfolio but which includes a minor fraction (20\%) of tokens from other blockchains. We modify this portfolio by assigning a fraction $\delta$ of fiat-backed stablecoins to it and multiplying the contribution of the other token types by $1-\delta$, for varying values of $\delta$. The low volatility of other stablecoins means that the failure probabilities of single-stablecoin portfolios are very low. Therefore, we conduct 1,000,000 runs for each table entry and display the 95\% confidence interval half-width below each estimate in \Cref{tab:results_stablecoins_t}. We also display the variance, which is strongly related to the failure risks, of each portfolio in the table (under vol.).
}

\rdc{
From \Cref{tab:results_stablecoins_t}, we observe that the failure risks of all single-stablecoin portfolios are much lower than those of the single-asset portfolios in \Cref{tab:results_t}, although there are still considerable differences between individual stablecoins: e.g., BUSD's volaility is 10 times higher than USDT's, and, for $\gamma'$, BUSD's estimated 1-day probability of failure is 177 times higher than that of USDT. For the modified C-Mix1 portfolios, we have used USDC as a representative stablecoin because of its high trade volume (there have been over $20$bn USDC in circulation since July 2021) and because its volatility is neither exceptionally high nor low among the included stablecoins. It can be seen that, as expected, the inclusion of other stablecoins leads to low failure risks, although we observe a small decrease for higher values of $\gamma'$: in particular,  the relative difference between $\delta=0$ and $\delta=0.25$ is roughly 2.5$\times$ (0.02976 vs.\ 0.01198), but this decreases to 2.24$\times$ for $\gamma'=1.5$. For $\delta=0$ and $\delta=0.5$, the relative difference decreases from 9.48$\times$ for $\gamma'=1.1$ to 7.22$\times$ for $\gamma'=1.5$. Overall, we observe that the inclusion of stablecoin fraction of $25\%$ can reduce the system's failure risk by more than $60\%$. }

\rdc{We note that a stablecoin system that is overcollateralized through a \textit{mixture} of fiat-backed stablecoins is more resilient to black-swan events than the individual stablecoins. The reason is that the collapse of any \textit{single} trusted third party would have a limited impact. However, black swan events may affect multiple fiat-backed stablecoins simultaneously, e.g., due to new legislation affecting stablecoin collateral. Thus, it would still be advisable to ensure that the \textit{total} contribution of fiat-backed stablecoins is limited. 
We also note, that, in theory, a principal crypto-backed stablecoin could include other crypto-backed stablecoins as collateral. In this setting, the portfolio of the supporting stablecoins should be included in the portfolio of the principal stablecoin. 
Finally, protocol designers with substantial fiat assets (e.g., Dai) also have the option to steer collateral portfolios towards those with minimal risk \cite{hajek2024collateral}, which would reduce the risks associated with the crypto collateral even further.} 

\section{Implementation}
\label{sec:implementation}
In this section, we present a prototype implementation of \name and compare its performance \rda{in terms of both cost and processing time} to a baseline that combines the existing DSS with wrapped tokens.
 We present three different subvariants of the relay chain design with a quorum of  nodes (i.e., the offchain variant of \Cref{sec:relay_chain_designs}).
Each subvariant has a different tradeoff in terms of scalability and generality, and consists of two smart contracts written in Ethereum's Solidity language: \texttt{Relay} and \texttt{CCTransfer}, with only \texttt{Relay} different between the three subvariants. The \texttt{Relay} and \texttt{CCTransfer} contracts interact with an ERC-20 compliant contract for token transfers, \texttt{Coin}, which re-uses code from the \texttt{Dai} contract in the DSS \cite{daisol}. We discuss the smart contracts in more detail in \Cref{sec:smart_contracts}.
The other components depicted in \Cref{fig:crocodai} (i.e., the vault, auction, and oracle contracts) are also re-used from the DSS. 
We perform two sets of experiments: microbenchmarks for the relay subvariants with varying quorum sizes, and end-to-end experiments for \name{}'s main operations: creating new tokens, and cross-chain transfers.

\subsection{Smart Contracts}
\label{sec:smart_contracts}

\hspace{0.35cm}\textbf{CCTransfer. } This contract stores cross-chain transfer requests as a hashmap of \texttt{Request} instances, which are implemented using a Solidity \texttt{struct}. Each \texttt{CCTransfer} contract is linked to a specific \texttt{Coin} and \texttt{Relay} contract pair, and must be made a \texttt{Coin} contract ward to allow it to mint or burn stablecoins upon the commitment of cross-chain transfers. 

\textbf{Relay. } In the first design (named $n$/$n$), a hashmap of relay node addresses in Solidity's \texttt{address} format is stored in the contract. For each cross-chain transfer, at least $2f+1$ relay nodes must call the \texttt{Relay} contract's \texttt{approve} function using their stored address. This design exploits the EVM's built-in signature verification, and generalizes trivially to other blockchain platforms. However, the downside is that each relay node must have an account and spend gas on each supported blockchain for every cross-chain transfer.
In the second design ($n$/$1$), a hashmap of relay node public keys is stored in the contract, and a batch of $2f+1$ 32-byte signatures is passed in the form of a byte array as a function parameter for each transaction. These signatures are then validated one by one by repeatedly calling the built-in \texttt{ecrecover} function. Although this is not as generic as the previous approach (e.g., each chain with its different cryptographic primitives would require its own off-chain signature generation script)
it is more efficient and does not require each individual relay node to maintain an account.
In the third design ($1$/$1$), we use a $t$-out-of-$n$ threshold signature scheme (where in our case $t=2f+1$) to produce a single 32-byte signature for the relay node quorum, e.g., using the procedure of \cite{gennaro2018fast}. The resulting signature is a standard ECDSA signature that can be verified using \texttt{ecrecover}. For our implementation we have used the multi-party ECDSA library by ZenGo X \cite{zengox}.

\begin{table}[b]
\centering
\caption{Performance comparison of the three implementation subvariants. }
\label{tab:implementation_choices}
\scalebox{0.9}{
\begin{tabular}{c|ccc|ccc}
 & \multicolumn{3}{c|}{gas cost} & \multicolumn{3}{c}{$T_{\textnormal{proc}}$ (s)} \\ 
design & $n=4$ & $n=10$ & $n=16$ & $n=4$ & $n=10$ & $n=16$ \\ \toprule
(1) $n$/$n$ & 171791 & 511473 & 977350 & {---} & {---} & {---} \\
(2) $n$/$1$ & 214373 & 430690 & 647118 & 0.025 & 0.071 & 0.091 \\
(3) $1$/$1$ & 53112 & 53117 & 53110 & 1.101 & 7.780 & 34.513 \\ \bottomrule
\end{tabular}
}
\end{table}

\subsection{Relay Microbenchmarks}

\Cref{tab:implementation_choices} compares the different designs in terms of the gas cost and the processing time at the relay nodes, $T_{\textnormal{proc}}$. Both are essential to system performance: if the on-chain cost is too high then each transfer will incur large fees, whereas if the computation time is too high then cross-chain transactions incur a large delay. For the experiments, we used a Lenovo IdeaPad with an Intel(R) Core(TM) i5-1035G1 CPU (1.00GHz) and 8GB RAM -- however, the gas costs are machine-independent. The scripts that performed off-chain computations use Python's Web3 package for cryptographic operations, and the underlying blockchain was simulated using Truffle's Ganache tool \cite{ganache}. \rdc{To execute our experiments, we ran the Python scripts in different terminals that each interacted with an active Ganache instance on a single machine.} From \Cref{tab:implementation_choices}, we observe that the first subvariant is the worst in terms of performance, which is the cost of its generality and ease of implementation. The second subvariant is more efficient in terms of gas costs for large quora, and only has little processing time at the relay nodes. The third subvariant is the most efficient in terms of gas costs, but at the cost of considerable processing time at the relay nodes, especially for large quora. For the first  subvariant (i.e., $n/n$), we were unable to distinguish $T_{\textnormal{proc}}$ from the transaction processing time in Ganache as both happen in a single step in Python's Web3 package -- however, these are assumed to be similar to signing in the $n/1$ design, and are indicated by `---' in the table.


\subsection{End-to-End Experiments}
\label{sec:end_to_end_experiments}

As discussed in \Cref{sec:baseline_comparison_analysis}, a baseline that uses the existing DSS with a greater variety of wrapped tokens has similar resilience to black swan events and price fluctuations as \name{}. 
However, we provide experimental results in this section that demonstrate that the DSS baseline has worse end-to-end performance than \name{}. We focus on the two main stablecoin actions: 
\begin{inparaenum}
\item creating new stablecoins on a coin chain and 
\item transferring stablecoins from one coin chain (Chain A) to another (Chain B). 
\end{inparaenum} 
To ensure that gas costs are comparable, we implement Chain A, Chain B, and the relay chain as private Ethereum chains using \texttt{geth}.
For collateral, we implement a custom ERC20 token contract on the underlying chain (\texttt{GemCoin}) -- for the DSS baseline, we also require a bridge for the custom token on Chain A (\texttt{GemBridge}), and another (\texttt{DaiBridge}) for Dai on the relay chain. For the collateral vault, we use instances of \texttt{Vat}, \texttt{DaiJoin}, and \texttt{GemJoin} where the latter is linked to the \texttt{GemCoin} contract -- in \name{} the vault contracts exist on Chain A, whereas in the DSS baseline they exist on the relay chain. Finally, the baseline requires a  \texttt{WrGemCoin} contract on the relay chain to implement the wrapped collateral token as an ERC20 token, and a \texttt{WrDai} contract for wrapped Dai tokens on each coin chain.



\textbf{Creating Stablecoins. }
For \name{}, the procedure for creating new stablecoins is as follows:
the user first transfers the collateral to the vault and creates a CDP using \texttt{GemCoin} and \texttt{GemJoin}.
Next, she  uses the CDP to create stablecoins and transfers them to her account using \texttt{Vat} and \texttt{DaiJoin}.

For the DSS baseline, the procedure is as follows.
The user first sends GemCoin tokens to the bridge using \texttt{GemCoin} and \texttt{BaseBridge}.
Next, a relay detects the transfer and creates an equivalent amount of wrapped GemCoin tokens for the user using \texttt{WrGemCoin}. 
The user then transfers the collateral to the vault, creates a CDP, creates stablecoins, and  transfers them to her account using \texttt{WrGemCoin}, \texttt{GemJoin}, \texttt{Vat} and \texttt{DaiJoin}.
Next, the user sends the Dai tokens to the bridge using \texttt{Dai} and \texttt{DaiBridge}.
Finally, a relay detects the transfer and mints an equivalent amount of wrapped Dai tokens for the user using \texttt{WrDai}. 

\textbf{Cross-Chain Transfers. }
For \name{}, the procedure for transferring stablecoins from Chain A to B is as follows.
The user first transfers the tokens to \texttt{CCTransfer} and creates a transfer request.
Next, a relay detects the request, sends a signed approval to the relay chain and \texttt{Relay} on Chain A, which calls \texttt{Coin} to burn the stablecoins.
Finally, the relay sends the signed message to \texttt{Relay} on Chain B, which calls \texttt{Coin} to mint the stablecoins.

For the DSS baseline, the procedure is as follows:
the user first burns her wrapped Dai tokens using \texttt{WrDai}.
Upon detection by a  relay, the burned Dai is refunded to the user on the relay chain.
The user then transfers the Dai to the \texttt{DaiBridge} for Chain B.
Finally, a relay detects the transfer and mints an equivalent amount of wrapped Dai tokens for the user using \texttt{WrDai} on Chain B.

\newcommand{\dbc}[1]{#1}
\newcommand{\dbcb}[1]{\textbf{#1}}

\setlength{\tabcolsep}{4pt}
\begin{table}[t]
\centering
\caption{Computation costs (units of gas) of the transactions in \name and the DSS baseline, grouped by block. \textbf{Bold}: relay chain computations.}
\label{tab:end_to_end_experiments}
\vspace{0.2cm}
\begin{tabular}{lc|cccc}
 & & \multicolumn{2}{c}{\name} & \multicolumn{2}{c}{DSS baseline} \\ 
 & & creation & transfer & creation & transfer \\ \toprule
Block 1 & (Chain A) & 
    \dbc{466537} & 
    \dbc{229325} & 
    \dbc{244036} &  
    \dbc{27762} \\
Block 2 & \textbf{(Relay)} & 
     & 
    \dbcb{133280} & 
    \dbcb{770724} & 
    \dbcb{294016} \\
Block 3 & (Chain B) & 
    & 
    \dbc{185503} & 
    & 
    \dbc{71498} \\
Block 3' & (Chain A) & 
    & 
    & 
    \dbc{71498}  & 
    \\ \midrule
Total & (coin chains) & 466537 & 414828 & 315534 & 99260 \\
\textbf{Total} & \textbf{(relay chain)} & \textbf{0} & \textbf{133280} & \textbf{770724} & \textbf{294016} \\
\bottomrule
\end{tabular}
\end{table}

\textbf{Results. }
\Cref{tab:end_to_end_experiments} displays the cost of the processes discussed above, with transactions that can occur within the same block grouped together.
For the DSS baseline, we assume that a single trusted custodian acts as a relay -- to keep the comparison with \name{} as fair as possible, we hence use the first prototype subvariant ($n/n$) but with a single signer. 
For our implementation, we assume that the recipient data for outgoing cross-chain transfers (i.e., the ID of the receiving chain and the recipient's account data) is not written to the blockchain; even in the fully on-chain design (see \Cref{sec:relay_chain_designs}), this data can be exchanged off-chain and incorporated in the commit message or zero-knowledge proof.

We observe that the DSS baseline has considerably higher gas costs (namely 1\,086\,258 vs.\ 466\,537 gas, or $\approx$$130\%$ more) for creating stablecoins, due to the expensive two-way bridge -- furthermore, it requires three blocks on two chains, whereas \name{} can perform all operations within a single Chain A block. This is a major difference because, to ensure security against fork attacks, blocks must be finalized before their information is processed on another chain -- e.g., using Ethereum's Casper FFG \cite{buterin2020incentives}, which currently takes between one and two 32-block epochs in Ethereum. As such, local stablecoin creation is considerably cheaper and faster in \name{} than in the baseline system, which allows \name{} to exploit the availability of local collateral without depending on it.
For cross-chain transfers, the total gas costs are slightly lower for the DSS baseline than for \name{} (393\,276 vs. 548\,108), but the DSS baseline consumes nearly twice as much gas on the relay chain (\textbf{bold}), which is in the DSS is Ethereum. This means that as long as gas is roughly 1.96$\times$ cheaper on the coin chains than on Ethereum, then \name has cheaper cross-chain transactions -- this is a reasonable assumption in practice, as lower transaction fees are a key feature of smaller chains. For example, transactions on Arbitrum and Optimism were typically at least 8$\times$ cheaper than on Ethereum in September 2023 \cite{l2fees}.











\section{Conclusions and Future Work}
\label{sec:conclusions}
We have presented \name, a stablecoin for cross-chain commerce. \name facilitates cross-chain transfers by design, and is more resilient than a siloed baseline to black swan events and price fluctuations. 
We have presented a prototype implementation that relies on a quorum of relay nodes for cross-chain transfers and governance decisions, and which has better performance than a DSS-based baseline in terms of (expensive) transactions on the relay chain.
For future work, an interesting direction is to implement a version of \name that performs all relay operations on-chain, as per the second variant of \Cref{sec:relay_chain_designs} -- e.g., using the approach of zkBridge, or succinct proofs of state as in Algorand or Mina.
Another interesting direction is to further investigate alternative collateral liquidation mechanisms -- for example,
 Salehi et al.\ \cite{salehi2021red}
 propose to mitigate liquidations from the DSS or remove them completely. 
A final interesting research direction is system recovery after a 51\% attack on a coin chain.

\section{Acknowledgements}
This work was supported by Ministry of Education (MOE) Singapore's Tier 2 Grant Award No. MOE-T2EP20120-0003.

\bibliographystyle{plain}
\bibliography{ref2}

\appendix


\section{Proofs of Security Theorems}
\label{sec:proofs}
Before we prove the main theorems, we need the following technical lemmas. We need the first to prove that if all CDPs are overcollateralized, then so is system as a whole.

\begin{lemma}
If $a_i/b_i > c$ for $i=1,\ldots,n$, then $\frac{\sum_{i=1}^n a_i}{\sum_{i=1}^n b_i} > c$.
\label{lm:convex_combination}
\end{lemma}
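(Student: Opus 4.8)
The plan is to prove this by clearing denominators and summing, which turns the claimed mediant-type inequality into a one-line consequence of the additivity of strict inequalities. First I would make explicit the hypothesis that each $b_i > 0$ (in all of our applications the $b_i$ are per-CDP debts $D_{j\,0}$, which are positive); this is precisely what makes $a_i/b_i > c$ equivalent to $a_i > c\,b_i$ without any sign reversal.

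Next I would sum the inequalities $a_i > c\,b_i$ over $i = 1,\ldots,n$. Since a finite sum of strict inequalities between real numbers is again strict (using $n \geq 1$), this yields $\sum_{i=1}^n a_i > c \sum_{i=1}^n b_i$. Finally, because $\sum_{i=1}^n b_i > 0$ as a sum of positive terms, dividing both sides by it preserves the inequality and gives $\bigl(\sum_{i=1}^n a_i\bigr) / \bigl(\sum_{i=1}^n b_i\bigr) > c$, as desired. An equivalent route is induction on $n$: the base case $n = 1$ is immediate, and the inductive step reduces to the two-term case $a_1/b_1 > c$, $a_2/b_2 > c \Rightarrow (a_1 + a_2)/(b_1 + b_2) > c$, proved by the same clearing-of-denominators argument.

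There is essentially no obstacle here beyond bookkeeping; the only point that genuinely needs care is the positivity of the $b_i$, since without it the statement is false (with a negative denominator the direction of the inequality can flip, and with a zero denominator the hypothesis is vacuous or ill-posed). I would therefore state this assumption at the outset and note that it is satisfied at every point where the lemma is invoked, namely in \Cref{thm:sudden_price_drop} and \Cref{thm:corrupted_chain}, where it is applied with $a_i$ the collateral value and $b_i$ the debt of an individual CDP, summed over $j \in \mathcal{J}_0$.
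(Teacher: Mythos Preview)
Your argument is correct. You clear denominators (using $b_i>0$), sum the resulting strict inequalities, and divide by $\sum_i b_i>0$; you also rightly flag that the positivity of the $b_i$ is an implicit hypothesis needed for the lemma to hold, and that it is satisfied in the applications.

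The paper takes a slightly different route: it rewrites $\frac{\sum_i a_i}{\sum_i b_i} = \sum_i \frac{a_i}{b_i}\cdot\frac{b_i}{\sum_j b_j}$, observes that the weights $b_i/\sum_j b_j$ are nonnegative and sum to $1$, and concludes that a convex combination of the ratios $a_i/b_i$ cannot fall below their minimum, which by hypothesis exceeds $c$. Both proofs rely on the same unstated assumption $b_i>0$ (in the paper's version, to ensure the weights are well-defined and nonnegative). Your approach is the more direct algebraic one; the paper's is the weighted-average/convexity viewpoint. They are equivalent in strength and length, and neither buys anything the other does not, though your explicit acknowledgment of the positivity hypothesis is a useful addition.
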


\begin{proof}
Note that $\frac{\sum_{i=1}^n a_i}{\sum_{i=1}^n b_i} = \sum_{i=1}^n \frac{a_i}{b_i} \frac{b_i}{\sum_{j=1}^n b_j}$. Since $\sum_{i=1}^n \frac{b_i}{\sum_{j=1}^n b_j} = 1$, $\frac{\sum_{i=1}^n a_i}{\sum_{i=1}^n b_i}$ is a convex combination of $a_i/b_i, \ldots, a_n/b_n$. As such, it cannot be smaller than $\min_{i=1,\ldots,n} a_i/b_i$, which is greater than $c$ by the statement of the lemma.
\end{proof}

The second lemma is a commonly-known result, but we include the proof here for completeness.

\begin{lemma}
If $X$ has a normal distribution with mean $0$ and variance $1$, then 
\begin{equation*}
R(x) = \mathbb{P}(X > x) \leq \frac{e^{-x^2/2}}{x\sqrt{2\pi}}.
\end{equation*}
\label{lm:normal_tail}
\end{lemma}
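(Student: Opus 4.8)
The plan is to establish the classical Gaussian (Mills-ratio) tail bound by the standard trick of inflating the integrand with a linear factor that is at least $1$ on the region of integration, so that the resulting integral becomes elementary. Throughout I assume $x > 0$, which is the relevant regime (and is implicitly required for the right-hand side to make sense).

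First I would write the tail probability explicitly as
\begin{equation*}
R(x) = \int_x^\infty \frac{1}{\sqrt{2\pi}} e^{-t^2/2}\, dt.
\end{equation*}
The key observation is that for every $t \geq x > 0$ we have $t/x \geq 1$, hence $e^{-t^2/2} \leq \frac{t}{x} e^{-t^2/2}$ on the entire interval of integration. Substituting this pointwise bound into the integral gives
\begin{equation*}
R(x) \leq \frac{1}{x\sqrt{2\pi}} \int_x^\infty t\, e^{-t^2/2}\, dt.
\end{equation*}

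Next I would evaluate the remaining integral in closed form, using that $t\,e^{-t^2/2}$ is the derivative of $-e^{-t^2/2}$: the antiderivative argument yields $\int_x^\infty t\, e^{-t^2/2}\, dt = \bigl[-e^{-t^2/2}\bigr]_x^\infty = e^{-x^2/2}$. Combining this with the previous display gives exactly
\begin{equation*}
R(x) \leq \frac{e^{-x^2/2}}{x\sqrt{2\pi}},
\end{equation*}
which is the claim. There is no real obstacle here — the only point requiring a little care is ensuring the argument is stated for $x>0$ so that the factor $t/x$ is indeed $\geq 1$ on $[x,\infty)$ and the bound is meaningful; the convergence of the integrals and the evaluation of the limit at $\infty$ are routine.
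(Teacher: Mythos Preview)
Your proof is correct and is essentially identical to the paper's own argument: both write the tail as an integral, use $t/x \geq 1$ on $[x,\infty)$ to inflate the integrand, and then evaluate $\int_x^\infty t\,e^{-t^2/2}\,dt = e^{-x^2/2}$ directly. Your explicit remark that $x>0$ is required is a small improvement in precision over the paper's version.
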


\begin{proof}
Since for $y \geq x \geq 0$ we have that $1 \leq \frac{y}{x}$, we have that
\begin{equation*}
\begin{split}
\mathbb{P}(X > x) & = \frac{1}{\sqrt{2\pi}} \int_{x}^{\infty} e^{-y^2/2} dy \\ 
& \leq \frac{1}{\sqrt{2\pi}} \int_{x}^{\infty} \frac{y}{x} e^{-y^2/2} dy = \frac{e^{-x^2/2}}{x\sqrt{2\pi}}.
\end{split}
\end{equation*}
\end{proof}
With these lemmas, we are ready to prove the three theorems from \Cref{sec:security_analysis}.
\begin{proof}[Proof of Theorem \ref{thm:sudden_price_drop}]

 If $p_{y_j\,t} c_{j\,t}/y_{j\,t} > \gamma$ holds for all $j \in \mathcal{J}$, then by \Cref{lm:convex_combination} it must hold that
\begin{equation}
\frac{C^{+}_0}{D^{+}_0} > \gamma.
\label{eq:unaffected_overcollateralized}
\end{equation}
Furthermore, we know from the definition of the debt ceiling that $D^{-}_0 < \zeta' (D^{-}_0 + D^{+}_0)$, where $\zeta'$ (as defined in \Cref{sec:security_analysis}) is the total debt ceiling of the affected tokens. The above can be rewritten to 
\begin{equation}
D^{-}_0 < \zeta'/(1-\zeta') D^{+}_0.
\label{eq:debt_ceiling_rewritten}
\end{equation}
We therefore find that
\begin{equation*}
\frac{C^*_{1}}{D^*_{1}} = \frac{C^{+}_{0}}{D^{-}_{0} + D^{+}_{0}} > \frac{C^{+}_{0}}{\left(1+\frac{\zeta'}{1-\zeta'}\right) D^{+}_{0}} > \gamma(1 - \zeta')
\end{equation*}
where the first inequality holds because of \eqref{eq:debt_ceiling_rewritten} and the second because of \eqref{eq:unaffected_overcollateralized}.
\end{proof}

\begin{proof}[Proof of Theorem \ref{thm:corrupted_chain}]
Because of the debt ceilings on the tokens, the total debt incurred for each token type $m$ must still satisfy $\sum_{\substack{j \in \mathcal{J}_t \\ y_j = m}} D_{j\,t} < \zeta_m \sum_{j \in \mathcal{J}_t} D_{j\,t}.$ -- as such, even after the attack, we must have 
$
D^{-}_{1} = D^{-}_{0} + h' < \zeta' (D^{+}_0 + D^{-}_0 + h')
$
or
\begin{equation}
D^{-}_0 + h' < \zeta'/(1- \zeta') D^{+}_0
\label{eq:debt_ceiling_rewritten_2}
\end{equation}
where $D^{-}_0$, $D^{+}_0$, and $\zeta^{-}$ are as defined in the previous section. We have that
\begin{equation*}
\frac{C^*_{1}}{D^*_{1}} = \frac{C^{-}_{0} + C^{+}_0}{D^{-}_{0} + D^{+}_{0} + h} > \frac{C^{-}_{0} + C^{+}_{0}}{\left(1 + \frac{\zeta'}{1-\zeta'}\right) D^{+}_0} > \gamma(1 - \zeta')
\end{equation*}
where the first inequality holds because of \eqref{eq:debt_ceiling_rewritten_2} and the second because $C^{-}_{0} > 0$ and because of \eqref{eq:unaffected_overcollateralized}. As such, we again find that the debt ceilings ensure that \name remains fully backed if it is sufficiently overcollateralized.
\end{proof}

\begin{proof}[Proof of Theorem \ref{thm:corrupted_price_feeds}]
In the following, we first assume that $O$ is odd and that some subset $O' \subset O$ is controlled by malicious parties who aim to convince a vault in \name that the price is at most than some $p^* > p$. In this case, the median equals the $\frac{1}{2} (M+1) + |O'|$ th order statistic from a sample of $M-|O|$ normally distributed random variables with mean $p$ and standard deviation $\sigma_o^2$. This order statistic is greater than the maximum of the $|M|-|O'|$ random variables, which we denote by $Y$. Let $\sigma_{*} = \min_{o \in O} \sigma_o $. $Y$'s cumulative distribution function equals 
\begin{equation*}
\begin{split}
\mathbb{P}(Y \geq p^*) & = \mathbb{P}\left(\bigwedge_{o \in O \backslash M'} \Delta_o > p^* - p\right) = \prod_{o \notin O'} R\left(\frac{p^*-p}{\sigma_o}\right)  \\ & \leq R^{|M|-|O|}\left(\frac{p^* - p}{\sigma_{*}}\right) \leq \frac{e^{-(|M|-|O'|)(p^* - p)^2/2}}{(p^* - p)\sigma\sqrt{2\pi}},
\end{split}
\end{equation*}
where the first inequality holds because $R$ is a monotonically decreasing function, and the second inequality follows from \Cref{lm:normal_tail}. As such, the probability $\mathbb{P}(Y \geq p^*)$ vanishes exponentially as $p^* - p$ increases. The proof for $p^* < p$ is very similar and uses the symmetry of the normal distribution.
\end{proof}




\section{DSS Parametrization}
\label{sec:dss_implementation}

The parameters of the various different vault types in the DSS can be found \textcolor{blue!50!gray}{\href{daistats.com}{here}}. A sample of parameter values as of 25 April 2023 (Ethereum block 17121982) can be found in \Cref{tab:dss_parameters}. For each of the supported token type $j$, we indicate the token name, total value of the collateral of CDPs of this type $\sum_{j \in \mathcal{J}_t} p_{j\,t} c_{j\,t}$, total debt of the collateral of CDPs of this type $\sum_{j \in \mathcal{J}_t} D_{j\,t}$, collateralization ratio, and debt ceiling $\zeta_j$. Although the exact values vary throughout the day, this table is meant to illustrate the typical composition of the collateral type in the DSS.

The following token types were supported in the DSS, but were later deprecated: AAVE, BAL, BAT, COMP, KNC, LRC, MANA, RENBTC, TUSD, UNI, USDT, and ZRX. The following liquidity pool tokens were deprecated: AAVE/ETH, LINK/ETH, UNI/ETH, wBTC/Dai, and wBTC/ETH.

\begin{table*}[ht]
    \centering    
    \caption{Information about the various collateral types used by the DSS (as of Ethereum block 17121982).}
    \label{tab:dss_parameters}
    \scalebox{0.65}{
    \begin{tabular}{clc|rrcrccc}
    & Name & Token/Asset Platform  & debt & coll.\ value & overcoll.\ & debt ceil.\ & stab.\ fee & liq.\ rate & contrib.\ \\ \toprule
\parbox[t]{2mm}{\multirow{13}{*}{\rotatebox[origin=c]{90}{Standard ERC-20 Tokens}}} & ETH-A & Ether & 233580929 & 421276897 & 180\% & 15000000000 & 1.50\% & 145\% & 4.69\% \\ 
 & ETH-B & ---  & 45480335 & 67797760 & 149\% & 250000000 & 3.00\% & 130\% & 0.91\% \\ 
 & ETH-C & --- & 306467435 & 407905059 & 133\% & 2000000000 & 0.75\% & 170\% & 6.16\% \\
 & WSTETH-A & Wrapped Ether & 105440104 & 123821333 & 117\% & 500000000 & 1.50\% & 160\% & 2.12\% \\
 & WSTETH-B & --- & 198020222 & 212522461 & 107\% & 500000000 & 0.75\% & 185\% & 3.98\% \\
 & RETH-A & --- & 3503590 & 6503590 & 186\% & 20000000 & 0.50\% & 170\% & 0.07\% \\
 & WBTC-A & Wrapped Bitcoin & 35684532 & 55869143 & 157\% & 500000000 & 1.75\% & 145\% & 0.72\% \\
 & WBTC-B & --- & 22535846 & 32535847 & 144\% & 250000000 & 3.25\% & 130\% & 0.45\% \\
 & WBTC-C & --- & 40106301 & 63565858 & 158\% & 500000000 & 1.00\% & 175\% & 0.81\% \\
 & LINK-A & ChainLink & 711324 & 3708567 & 521\% & 5000000 & 2.50\% & 165\% & 0.01\% \\
 & YFI-A & Yearn & 148553 & 1941526 & 1307\% & 4000000 & 1.50\% & 165\% & 0.00\% \\
 & MATIC-A & Polygon & 194080 & 5193556 & 2676\% & 15000000 & 3.00\% & 175\% & 0.00\% \\
 & GNO-A & Gnosis & 3020632 & 5000000 & 166\% & 5000000 & 2.50\% & 350\% & 0.06\% \\ \midrule
& \multicolumn{2}{l}{\hspace{0.2cm}\textbf{Total (ERC-20 tokens):}} & \textbf{994893882} & \textbf{1407641598} & & & & & \textbf{20.00\%} \\ \midrule
\parbox[t]{2mm}{\multirow{5}{*}{\rotatebox[origin=c]{90}{LP Tokens}}} & UNIV2USDCETH-A & USDC/ETH& 2139902 & 0 & 0\% &  & 1.50\% & 120\% & 0.04\% \\
 & UNIV2DAIUSDC-A & Dai/USDC & 23983390 & 0 & 0\% &  & 0.02\% & 102\% & 0.48\% \\
 & GUNIV3DAIUSDC1-A & Dai/USDC & 59841202 & 0 & 0\% &  & 0.02\% & 102\% & 1.20\% \\
 & GUNIV3DAIUSDC2-A & Dai/USDC & 104141233 & 0 & 0\% &  & 0.06\% & 102\% & 2.09\% \\
 & CRVV1ETHSTETH-A & ETH/wETH & 56643346 & 66542992 & 117\% & 100000000 & 1.50\% & 155\% & 1.14\% \\ \midrule
& \multicolumn{2}{l}{\hspace{0.2cm}\textbf{Total (Liquidity Pool ERC-20 tokens):}} & \textbf{246749074} & \textbf{66542992} & & & & & \textbf{4.96\%} \\ \midrule
\parbox[t]{2mm}{\multirow{13}{*}{\rotatebox[origin=c]{90}{Real-World Assets}}} & RWA001-A & 6s Capital & 14348036 & 15000000 & 105\% &  & 3.00\% &  & 0.29\% \\
 & RWA002-A & Centrifuge & 6429638 & 20000000 & 311\% &  & 3.50\% &  & 0.13\% \\
 & RWA003-A & --- & 1824487 & 2000000 & 110\% &  & 6.00\% &  & 0.04\% \\
 & RWA004-A & --- & 1468187 & 7000000 & 477\% &  & 7.00\% &  & 0.03\% \\
 & RWA005-A & --- & 6076346 & 15000000 & 247\% &  & 4.50\% &  & 0.12\% \\
 & RWA006-A & AEA & 0 & 0 &  &  & 2.00\% &  & 0.00\% \\
 & RWA007-A & Monetalis & 499950318 & 1250000000 & 250\% &  & 0.00\% &  & 10.05\% \\
 & RWA008-A & SG Forge & 0 & 30000000 &  &  & 0.05\% &  & 0.00\% \\
 & RWA009-A & H.\ V.\ Bank & 100000000 & 100000000 & 100\% &  & 0.00\% &  & 2.01\% \\
 & RWA010-A & Centrifuge & 0 & 20000000 &  &  & 4.00\% &  & 0.00\% \\
 & RWA011-A & --- & 0 & 30000000 &  &  & 4.00\% &  & 0.00\% \\
 & RWA012-A & --- & 13619456 & 30000000 & 220\% &  & 4.00\% &  & 0.27\% \\
 & RWA013-A & --- & 69525268 & 70000000 & 101\% &  & 4.00\% &  & 1.40\% \\ \midrule
& \multicolumn{2}{l}{\hspace{0.2cm}\textbf{Total (Real-World Assets):}} & \textbf{713241736} & \textbf{1589000000} & & & & & \textbf{14.34\%} \\ \midrule
\parbox[t]{2mm}{\multirow{3}{*}{\rotatebox[origin=c]{90}{PSMs}}} & PSM-USDC-A & USD Coin & 2118503205 & 2608700392 & 123\% & 10000000000 &  &  & 42.58\% \\
 & PSM-USDP-A & Pax Dollar & 499997224 & 500000000 & 100\% & 500000000 &  &  & 10.05\% \\
 & PSM-GUSD-A & Gemini Dollar & 387538598 & 437538598 & 113\% & 500000000 &  &  & 7.79\% \\ \midrule
& \multicolumn{2}{l}{\hspace{0.2cm}\textbf{Total (Peg Stability Modules):}} & \textbf{3006039027} & \textbf{3546238990} & & & & & \textbf{60.42\%} \\ \midrule
    \end{tabular}
    }
\end{table*}

\section{Price Data Table}
\label{sec:price_data_table}
\Cref{tab:empirical_data_table} contains a specification of the time periods duirng which we collected price data for the empirical analysis of \Cref{sec:financial_analysis}. \Cref{fig:all_correlations} visualizes the correlations between the 100 tokens in our dataset.

\begin{table}[ht]
\centering
\caption{Data periods (times are in Greenwich Mean Time).}
\label{tab:empirical_data_table}
\begin{tabular}{c|ccc}
 \multicolumn{1}{c}{} & start & end & number of slots \\ \toprule
 Period 1 & 26-08-2022 02:16 & 30-08-2022 00:26 & 1131 \\
 Period 2 & 07-09-2022 03:15 & 15-09-2022 08:45 & 2371 \\
 Period 3 & 30-09-2022 07:37 & 02-10-2022 05:27 & 551\\
 Period 4 & 07-10-2022 04:31 & 15-10-2023 22:46 & 2010 \\
 Period 5 & 17-10-2022 02:51 & 29-10-2022 16:26 & 3620 \\
 Period 6 & 04-11-2022 03:50 & 23-11-2022 05:15 & 5490 \\ \bottomrule
\end{tabular}
\end{table}

\begin{figure*}[ht]
    \centering
	\includegraphics[width=\linewidth]{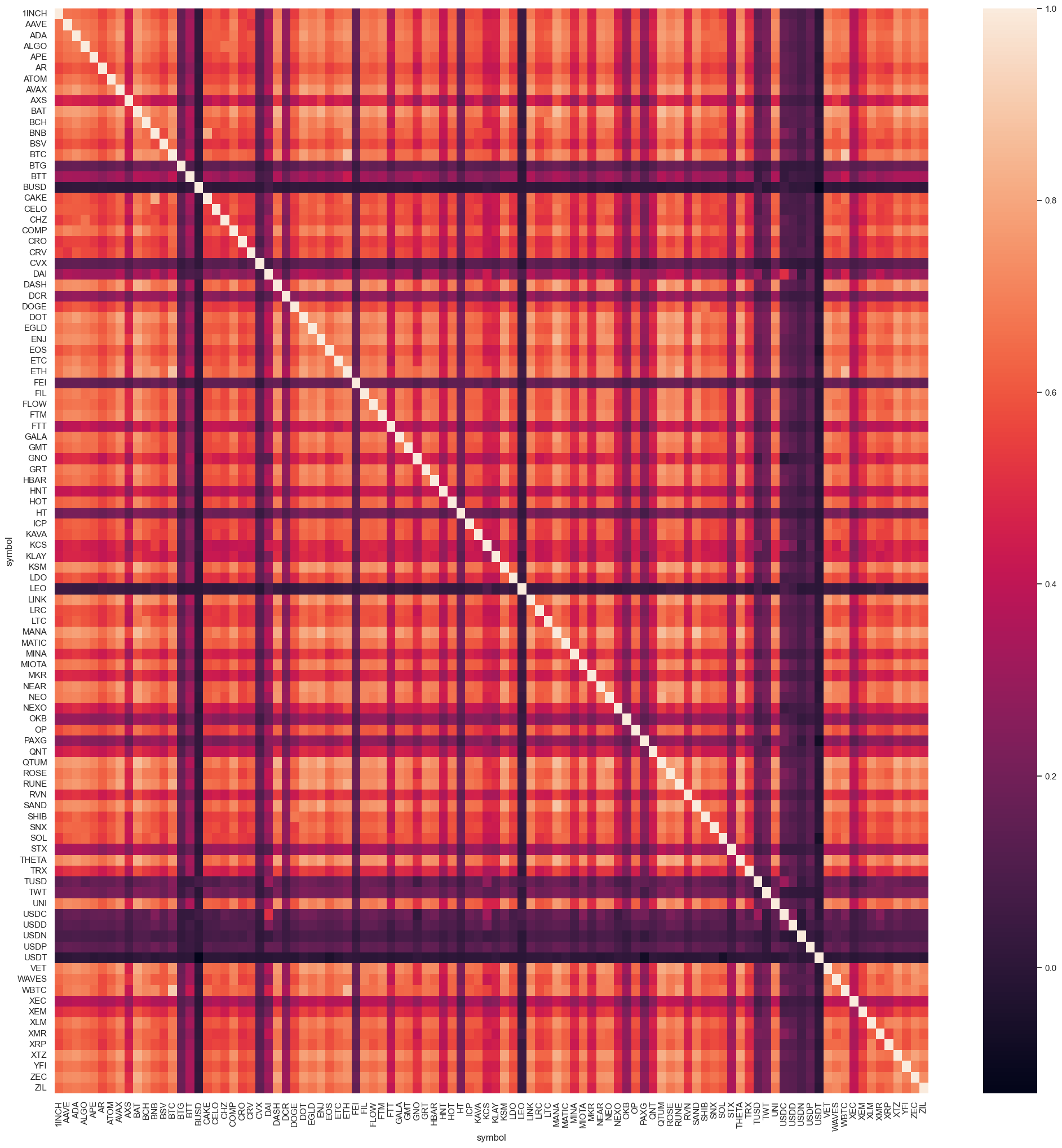}
	\caption{The correlations between all 100 tokens in our dataset.}
    \label{fig:all_correlations}
\end{figure*}


\section{Portfolio Compositions}
\label{sec:portfolio_compositions}
\Cref{tab:portfolio_compositions} displays the composition of the DSS-based simulation portfolios. \Cref{tab:optimal_portfolio_compositions} displays the composition of the optimized portfolios. \Cref{tab:portfolio_compositions_all} displays the optimal portfolio can be constructed using the 100 tokens in our dataset, if other stablecoins can be included.

\begin{table}[ht]
\caption{DSS-based portfolio compositions.}
\label{tab:portfolio_compositions}
\centering
\begin{tabular}{cc} 
portfolio name & token distribution \\
\toprule
\multirow{2}{*}{D-All} & 47\% USDC, 19\% ETH, 14\% PAXG \\ 
 & 10\% USDP, 8\% USDT, 2\% WBTC\\ \midrule
\multirow{2}{*}{D-ERC} & 90.2\% ETH, 9.4\% WBTC, 0.2\% GNO,  \\ 
 & 0.1\% LINK, 0.05\% MATIC, 0.05\% YFI\\ \midrule
\multirow{2}{*}{D-Mix1} & 50\% ETH, 30\% WBTC, 5\% GNO,  \\ 
 & 5\% LINK, 5\% MATIC, 5\% YFI\\ \midrule
\multirow{2}{*}{D-Mix2} & 20\% ETH, 20\% WBTC, 15\% GNO,  \\ 
 & 15\% LINK, 15\% MATIC, 15\% YFI\\ \bottomrule
\multirow{2}{*}{C-Mix1} & 50\% ETH, 30\% WBTC, 5\% ADA,  \\ 
 & 5\% DOT, 5\% TRX, 5\% AVAX \\ \bottomrule
\multirow{2}{*}{C-Mix2} & 20\% ETH, 20\% WBTC, 15\% ADA,  \\ 
 & 15\% DOT, 15\% TRX, 15\% AVAX \\ \bottomrule
\end{tabular}
\end{table}

\begin{table}[ht]
\caption{Optimal portfolio compositions.}
\label{tab:optimal_portfolio_compositions}
\centering
\begin{tabular}{cc} 
name & optimal token distribution \\
\toprule
\multirow{1}{*}{D-Opt} & WBTC 74.015\%, GNO 25.985\% \\ \midrule
\multirow{1}{*}{D-Opt} & WBTC 20\%, ETH 20\%  \\ 
($\lambda = 0.2$) & GNO 20\%, LINK 20\%, YFI 20\% \\ \midrule
\multirow{1}{*}{C-Opt} & WBTC 62.204\%, TRX 37.796\% \\ \bottomrule
\multirow{1}{*}{C-Opt} & WBTC 20\%, ETH 16.656\%, ADA 14.493\% \\
($\lambda = 0.2$) & DOT 20\%, TRX 20\%, EOS 8.851\% \\ \bottomrule
\multirow{4}{*}{A-Opt} & BSV 8.728\%, BTG 0.796\%, BTT 36.821\%, \\
& DCR 4.357\%, GNO 5.221\%, HT 0.0762\% \\
& KCS 2.726\%, KLAY 5.705\%, LEO 17.568\% \\
& OKB 3.223\%, TRX 13.017\%, TWT 0.483\%, XLM 0.593\% \\ \bottomrule
& BSV 10\%, BTG 1.354\%, BTT 10\%, \\
\multirow{2}{*}{A-Opt}& CAKE 0.644\%, DCR 7.278\%, GNO 10\%, \\
\multirow{2}{*}{($\lambda = 0.1$)}& HT 1.742\%, KCS 5.819\%, KLAY 9.434\%, \\
& LEO 10\%, NEXO 0.311\%, OKB 5.604\%, \\
& TRX 10\%, TWT 8.61\%, WBTC 10\%, XLM 6.951\% \\ \bottomrule
\end{tabular}
\end{table}

\begin{table}[ht]
\caption{Optimal simulation portfolio if other stablecoins are included.}
\label{tab:portfolio_compositions_all}
\centering
\begin{tabular}{ccc}
\toprule
BTT 0.34\% & BUSD 0.1504\% & PAXG 8.72\% \\ USDC 35.90\% & USDN 0.12\% & USDT 39.89\% \\
\bottomrule
\end{tabular}
\end{table}


\end{document}